\providecommand{\U}[1]{\protect\rule{.1in}{.1in}}
\newtheorem{theorem}{Theorem}
\newtheorem{lemma}{Lemma}
\newenvironment{proof}[1][Proof]{\noindent\textbf{#1.} }{\ \rule{0.5em}{0.5em}}
\begin{document}
\title[Competitive Epidemics]{May the Best Meme Win!: New Exploration of Competitive Epidemic Spreading over
Arbitrary Multi-Layer Networks}
\author{Faryad Darabi Sahneh}
\email{faryad@ksu.edu}
\affiliation{Electrical and Computer Engineering Department, Kansas State University}
\author{Caterina Scoglio}
\affiliation{Electrical and Computer Engineering Department, Kansas State University}
\keywords{Competitive epidemic spreading, multilayer networks, mutual exclusion,
coexistence, SI1SI2S, survival threshold, winning threshold}
\begin{abstract}
This study extends the SIS\ epidemic model for single virus propagation over
an arbitrary graph to an SI$_{1}$SI$_{2}$S epidemic model of two exclusive,
competitive viruses over a two-layer network with generic structure, where
network layers represent the distinct transmission routes of the viruses. We
find analytical results determining extinction, mutual exclusion, and
coexistence of the viruses by introducing the concepts of survival threshold
and winning threshold. Furthermore, we show the possibility of coexistence in
SIS-type competitive spreading over multilayer networks. Not only do we
rigorously prove a region of coexistence, we quantitate it via interrelation
of central nodes across the network layers. Little to no overlapping of layers
central nodes is the key determinant of coexistence. Specifically, we show
coexistence is impossible if network layers are identical yet possible if the
network layers have distinct dominant eigenvectors and node degree vectors.
For example, we show both analytically and numerically that positive
correlation of network layers makes it difficult for a virus to survive while
in a network with negatively correlated layers survival is easier but total
removal of the other virus is more difficult. We believe our methodology has
great potentials for application to broader classes of multi-pathogen
spreading over multi-layer and interconnected networks.

\end{abstract}
\maketitle

\section{Introduction}

Multiple viral spreading within a single population involves very rich
dynamics \cite{Newman2011PRE}, attracting substantial attention
\cite{funk2010PRE, ahn2006PRE, wei2013JSAC}. Applications of these types of
models extend beyond physiological viruses, as `virus' may refer to products
\cite{aral2011MS}, memes \cite{Alex2012NSR}, pathogens \cite{Pej2011P}, etc.
Multiple virus propagation is a mathematically challenging problem. This
problem becomes particularly much more complicated if the network through
which viruses propagate are distinct. Current knowledge of how hybridity of
underlying topology influences fate of the pathogens is very little and
limited. These systems are usually mathematically intractable, hindering
conclusive results on spreading of multiple viruses on multi-layer networks.

Another source of complexity for this problem are multiple interaction
possibilities among viruses. For example, viruses may be reinforcing
\cite{newman2013X}, weakening \cite{granell2013X}, exclusive
\cite{newman2005PRL}, or asymmetric \cite{ahn2006PRE, wu2013JNS}. Newman
\cite{newman2005PRL} employed bound percolation to study the spread of two SIR
viruses in a host population through a single contact network, where a virus
takes over the network, then a second virus spreads through the resulting
residual network. The paper proved a coexistence threshold above the classical
epidemic threshold, indicating the possibility of coexistence in SIR model.
Karrer and Newman \cite{Newman2011PRE} extended the work to the more general
case where both viruses spread simultaneously. For SIS epidemic spreading,
Wang et el. \cite{wang2012NJP} studied competitive viruses and proved
exclusive, competitive SIS viruses cannot coexist in scale-free networks.

Multilayer networks generate interesting results for competitive viral
spreading. This type of models have implications in several applications like
product adoption (e.g. Apple vs. Android smart phones), virus-antidode
propagation, meme propagation, opposing opinions propagation, and etc. In
competitive spreading scenario, if infected by one virus, a node (individual)
cannot be infected by the other virus. Funk and Jansen \cite{funk2010PRE}
extended the bond percolation analysis of two competitive viruses to the case
of a two-layer network, investigating effects of layer overlapping. Granell et
al. \cite{granell2013X} studied the interplay between disease and information
co-propagation in a two-layer network consisting of one physical contact
network spreading the disease and a virtual overlay network propagating
information to stop the disease. They found a meta-critical point for the
epidemic onset leading to disease suppression. Importantly, this critical
point depends on awareness dynamics and the overlay network structure. Wei et
al. \cite{wei2012CCR} studied SIS spreading of two competitive viruses on an
arbitrary two-layer network, deriving sufficient conditions for exponential
die-out of both viruses. They introduced a statistical tool, EigenPredict, to
predict viral dominance of one competitive virus\ over the other
\cite{wei2013JSAC}.

In this paper, we address the problem of two competitive viruses propagating
in a host population where each virus has distinct contact network for
propagation. In particular, we study an $SI_{1}SI_{2}S$ model as the simplest
extension from SIS\ model for single virus propagation to competitive
spreading of two viruses on a two-layer network. From topology point of view,
our study is comprehensive because our multilayer network is allowed to have
any arbitrary structure.

Our paper is most relevant to \cite{wei2012CCR} and \cite{wei2013JSAC}. Wei et
al. conjectured in \cite{wei2012CCR} and numerically observed in
\cite{wei2013JSAC} that \textquotedblleft\emph{the meme whose first
eigenvalue}\footnote{Wei et al. \cite{wei2013JSAC} defined first eigenvalue of
of a meme as $\boldsymbol{\beta}\lambda_{1}-\boldsymbol{\delta}$, where
$\boldsymbol{\beta}$ is infection probabiltiy, $\boldsymbol{\delta}$ is curing
probabilty, and $\lambda_{1}$ is spectral radius of the underlying graph
layer.}\emph{ is larger tends to prevail eventually in the composite
networks.}\textquotedblright\ We challenge this argument from two aspects:
First, the definition of viral dominance in \cite{wei2013JSAC} is related to
comparison of fractions of nodes infected by each virus. However, when
comparing two viruses with two different contact networks, having a larger
eigenvalue is not a direct indicator of a higher final fraction of infected
nodes. In fact, it is possible to create two distinct network layers where a
meme spreading in the population with smaller eigenvalue takes over a much
larger fraction of the population. We find the definition of viral dominance
presented in \cite{wei2013JSAC} cannot be corroborated with eigenvalues
without severe restriction to a specific family of networks.

Second, and of paramount interest in this paper, largest eigenvalue is a graph
property\footnote{A graph property is any property on a graph which is
invariant under relabeling of nodes. Eigenvalues, degree moments, graph
diameter, etc. are examples of graph property.} of the layers in isolation and
thus does not have the capacity to discuss the joint influence of the network
topology, unless some sort of symmetry or homogeneity is assumed. In fact, the
generation of one layer in their synthetic multi-layer network via the Erdos
Reyni model \cite{wei2013JSAC} dictated a homogeneity in their multilayer
networks, creating a biased platform for further observations of layer
interrelations. Our work more accurately addresses network interrelation than
presented by Wei et al. \cite{wei2013JSAC} in moving beyond viral aggressivity
in isolation. We derived formulae more accurately and fully describing effect
of individual network layers and their interrelatedness.

We quantitate interrelations of contact layers in terms of spectral properties
of a set of matrices. Therefore, our results are not limited to any
homogeneity assumption or degree distribution and network model arguments. We
find analytical results determining extinction, mutual exclusion, and
coexistence of the viruses by introducing concepts of survival threshold and
winning threshold. Furthermore, we show possibility of coexistence in SIS-type
competitive spreading over multilayer networks. Not only do we prove a
coexistence region rigorously, we quantitate it via interrelation of central
nodes across the network layers. None or small overlapping of central nodes of
each layer is the key determinant of coexistence. We employ a novel multilayer
network generation framework to obtain a set of networks so that individual
layers have identical graph properties while the interrelation of network
layers varies. Therefore, any difference in outputs is purely the result of
interrelation. This makes ours a paradigmatic contribution to shed light on
topology hybridity in multilayer networks.

\section{Competitive Epidemics in Multi-Layer
Networks\label{Section: Modeling}}

In this paper, we study a continuous time $SI_{1}SI_{2}S$ model of two
competitive viruses propagating on a two-layer network, initially proposed in
discrete time\footnote{Wei et al. \cite{wei2012CCR} referred to their model as
$SI_{1}I_{2}S$. We prefer $SI_{1}SI_{2}S$ as a better candidate to emphasize
impossibility of direct transition between $I_{1}$ and $I_{2}$ in this
model.}\cite{wei2012CCR}.

\subsection{Multilayer Network Topology}

Consider a population of size $N$ among which two viruses propagate, acquiring
distinct transmission routes. Represented mathematically, the network topology
is a multi-layer network because two link types are present; one type allows
transmission of virus $1$ and the other type. allows transmission of virus
$2$. We represent this multilayer network as $\mathcal{G}(V,E_{A},E_{B})$,
where $V$ is the set of vertices (nodes) and $E_{A}$ and $E_{B}$ are set of
edges (links). By labeling vertices from $1$ to $N$, adjacency matrices
$A\triangleq\lbrack a_{ij}]_{N\times N}$ and $B\triangleq\lbrack
b_{ij}]_{N\times N}$ correspond to edge sets $E_{A}$ and $E_{B}$,
respectively, where $a_{ij}=1$ if node $j$ can transmit virus $1$ to node $i$,
otherwise $a_{ij}=0$ , and similarly $b_{ij}=1$ if node $j$ can transmit virus
$2$ to node $i$, otherwise $b_{ij}=0$. We assume the network layers are
symmetric, i.e., $a_{ij}=a_{ji}$ and $b_{ij}=b_{ji}$. Corresponding to
adjacency matrices $A$, we define $\boldsymbol{d}_{A}$ as the node degree
vector, i.e., $d_{A,i}=\sum_{j=1}^{N}a_{ij}$, $\lambda_{1}(A)$ as the largest
eigenvalue (or spectral radius) of $A,$ and $\boldsymbol{v}_{A}$ as the
normalized dominant eigenvector, i.e., $A\boldsymbol{v}_{A}=\lambda
_{1}(A)\boldsymbol{v}_{A}$ and $\boldsymbol{v}_{A}^{T}\boldsymbol{v}_{A}=1$.
We similarly define $\boldsymbol{d}_{A}$, $\lambda_{1}(A)$, and
$\boldsymbol{v}_{A}$ for adjacency matrix $B$.

Unlike simple, single-layer graphs, multilayer networks have not been studied
much in network science. We define simple graphs $G_{A}(V,E_{A})$ and
$G_{B}(V,E_{B})$ to refer to each isolated layer of the multilayer network
$\mathcal{G}(V,E_{A},E_{B})$. This allows us to argue multilayer network
$\mathcal{G}$\ in terms of simple graphs $G_{A}$ and $G_{B}$ properties and
their \emph{interrelation}. FIG. \ref{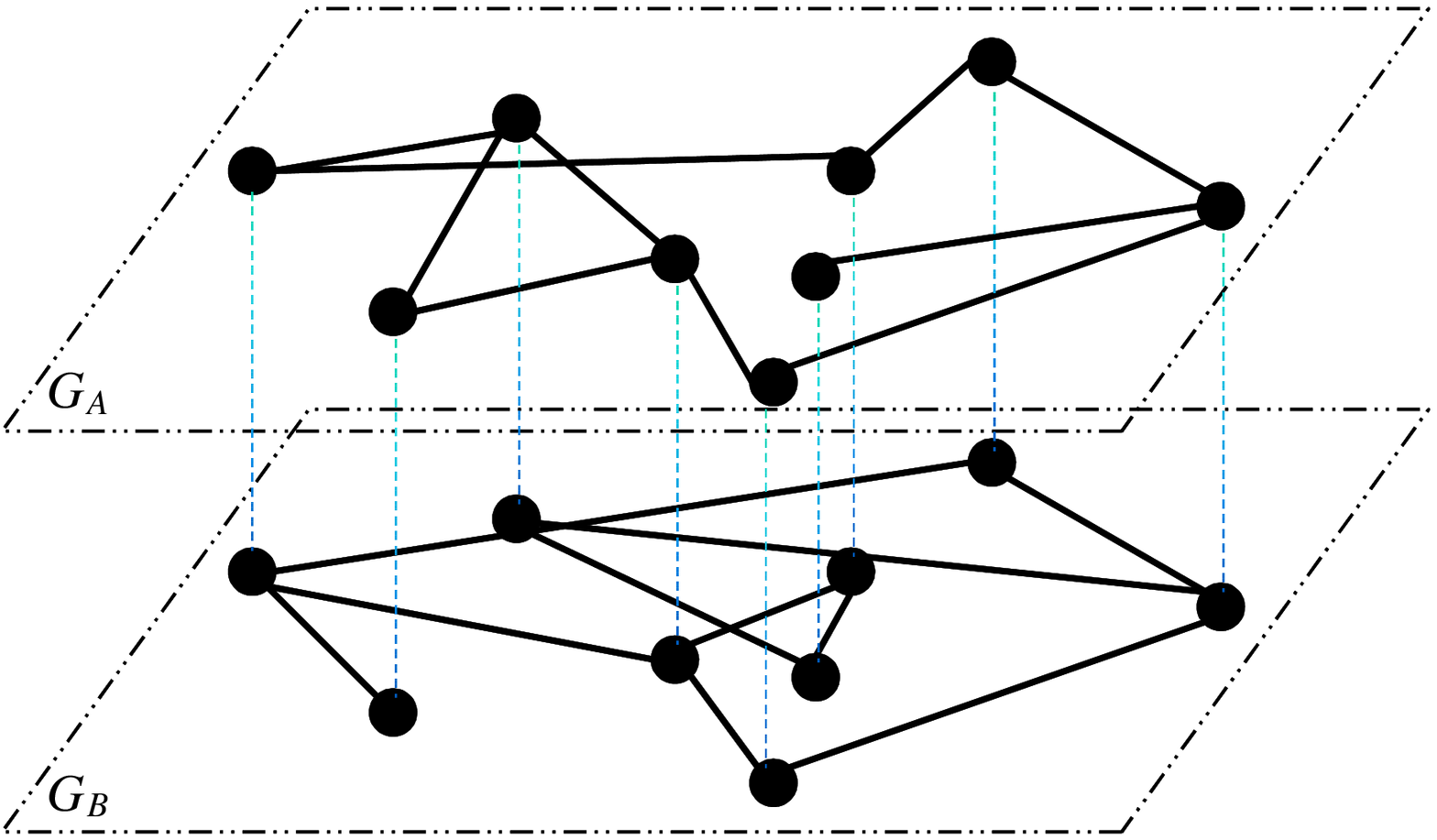} shows a schematics
of the two-layer network.

\begin{figure}[ptb]
\centering
\includegraphics[ width=3 in]{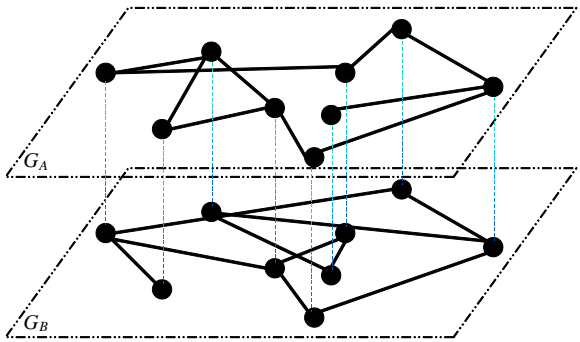}\caption{Schematics of
two-layer contact topology $\mathcal{G}(V,E_{A},E_{B})$, where a group of
nodes share two distinct interactions. In our $SI_{1}SI_{2}S$ model, virus $1$
transmits exclusively via $E_{A}$ links while virus $2$ transmits only through
$E_{B}$ links. Dotted vertical lines reiterate individual nodes are the same
in both layers of $\mathcal{G}$.}%
\label{twolayernetwork.eps}%
\end{figure}

\subsection{$SI_{1}SI_{2}S$ Model}

The $SI_{1}SI_{2}S$ model is an extension of continuous-time SIS spreading of
a single virus on a simple graph \cite{van2009TN, Ganesh2005INFOCOM} to
modeling of competitive viruses on a two-layer network. In this model, each
node is either `\emph{Susceptible},' `$I_{1}-$\emph{Infected},' or `$I_{2}%
-$\emph{Infected}' (i.e.,infected by virus $1$ or $2$, respectively), while
virus $1$ spreads through $E_{A}$ edges and virus $2$ spreads through $E_{B}$ edges.

In this competitive scenario the two viruses are exclusive: \emph{a node
cannot be infected by virus }$1$\emph{\ and virus }$2$\emph{\ simultaneously}.

Consistent with SIS\ propagation on a single graph (cf. \cite{van2009TN,
Ganesh2005INFOCOM}), the infection and curing processes for virus $1$ and $2$
are characterized by $(\beta_{1},\delta_{1})$ and $(\beta_{2},\delta_{2})$,
respectively. To illustrate, the curing process for $I_{1}-$infected node $i$
is a Poisson process with curing rate $\delta_{1}>0$. The infection process
for susceptible node $i$ effectively occurs at rate $\beta_{1}Y_{i}(t)$, where
$Y_{i}(t)$ is the number of $I_{1}-$infected neighbors of node $i$ at time $t$
in layer $G_{A}$. \emph{Effective infection rate} of a virus, defined as the
ratio of the infection rate over the curing rate, measures the expected number
of attempts of an infected node to infect its neighbor before recovering, thus
quantifying aggressiveness of a virus per contact. Curing and infection
processes for virus $2$ are similarly described. FIG.
\ref{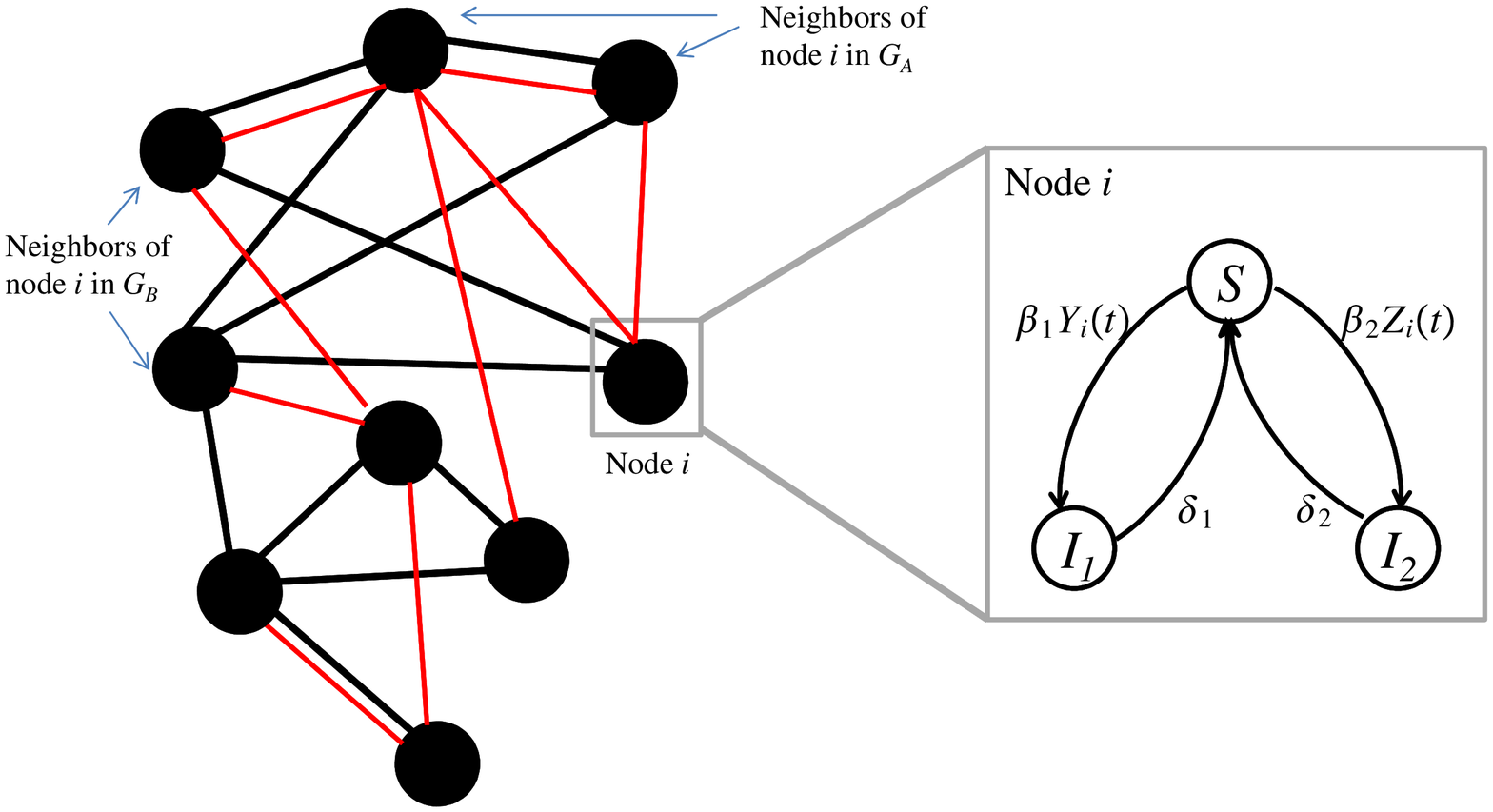}\ depicts a schematic of the $SI_{1}SI_{2}S$
competitive epidemic spreading model over a two-layer network.

\begin{figure}[ptb]
\centering
\includegraphics[ width=3.4 in]{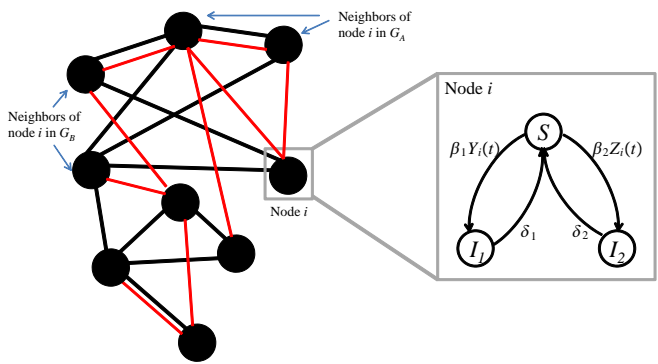}\caption{Schematics of a
contact network with the node-level stochastic transition diagram for node
$i$, according to the $SI_{1}SI_{2}S$ epidemic spreading model. Parameters
$\beta_{1}$ and $\delta_{1}$ denote virus $1$ infection rate and curing rate,
respectively, and $Y_{i}(t)$ is the number of node $i$ neighbors in layer
$G_{A}$ infected by virus $1$ at time $t$. Similarly, $\beta_{2}$ and
$\delta_{2}$ denote virus $2$ infection rate and curing rate, respectively,
and $Z_{i}(t)$ is the number of node $i$ neighbors in layer $G_{B}$ infected
by virus $2$ at time $t$.}%
\label{schematicssi1i2s.eps}%
\end{figure}

The $SI_{1}SI_{2}S$ model is essentially a coupled Markov process. For a
network with arbitrary structure, this model becomes mathematically
intractable due to exponential explosion of its Markov state space size
\cite{sahneh2013TON}. To overcome this issue with coupled Markov processes,
applying closure techniques results in approximate models with much smaller
state space size, however at the expense of accuracy. Specifically, a first
order mean-field type approximation \cite{sahneh2013TON} suggests the
following differential equations for the evolution of infection probabilities
of virus $1$ and $2$, denoted by $p_{1,i}$ and $p_{2,i}$ for node $i$,
respectively:%
\begin{align}
\dot{p}_{1,i}  &  =\beta_{1}(1-p_{1,i}-p_{2,i})\sum\nolimits_{j=1}^{N}%
a_{ij}p_{1,j}-\delta_{1}p_{1,i},\label{CSIS1}\\
\dot{p}_{2,i}  &  =\beta_{2}(1-p_{1,i}-p_{2,i})\sum\nolimits_{j=1}^{N}%
b_{ij}p_{2,j}-\delta_{2}p_{2,i}, \label{CSIS2}%
\end{align}
for $i\in\{1,...,N\}$, with the state-space size of $2N$. This model is an
extension of NIMFA model \cite{van2009TN} for SIS\ spreading on simple graphs.

Our competitive virus propagation model (\ref{CSIS1}-\ref{CSIS2}) exhibits
rich dynamical behavior dependent on epidemic parameters and contact network
multi-layer structure. Values of effective infection rates $\tau_{1}%
\triangleq\frac{\beta_{1}}{\delta_{1}}$ and $\tau_{2}\triangleq\frac{\beta
_{2}}{\delta_{2}}$\ of virus $1$ and $2$ yields several possible outcomes for
$SI_{1}SI_{2}S$ model (\ref{CSIS1}-\ref{CSIS2}). In particular, both viruses
may extinct ultimately, or one removes the other one, or both coexist.

\subsection{Problem Statement}

Linearization of our $SI_{1}SI_{2}S$ model (\ref{CSIS1}-\ref{CSIS2}) at the
healthy equilibrium (i.e. $p_{1,i}=p_{2,i}=0,i\in\{1,...,N\}$) demonstrates
the exponential extinction condition for both viruses. When $\tau
_{1}<1/\lambda_{1}(A)$ and $\tau_{2}<1/\lambda_{1}(B)$, any initial infections
exponentially die out. In this paper, we refer to such critical value as
\emph{no-spreading threshold} because a virus with a lower effective infection
rate is too weak to spread in the population even in the absence of any viral competition.

Wei et al. \cite{wei2012CCR} detailed the no-spreading condition as: If
$\tau_{1}<1/\lambda_{1}(A)$, virus $1$ does not spread, exponentially dying
out. Importantly, exponential extinction of both viruses occurs only if
$\tau_{1}<1/\lambda_{1}(A)$ and $\tau_{2}<1/\lambda_{1}(B)$ simultaneously.
Dynamical interplay between the competitive viruses does not affect the
no-spreading thresholds $\tau_{1}^{0}=1/\lambda_{1}(A)$ and $\tau_{2}%
^{0}=1/\lambda_{1}(B)$ for virus $1$ and virus $2$. These thresholds remain
independent of viral aggressivity of competitive viruses and network layers
interrelation. Exponential extinction is the only analytical outcome in Wei
\cite{wei2012CCR}. Our paper addresses two scenarios where for both viruses
$\tau_{1}>1/\lambda_{1}(A)$ and $\tau_{2}>1/\lambda_{1}(B)$.

\textbf{Problem:} Assume the effective infection rates of each virus is larger
than their no-spreading threshold, i.e., $\tau_{1}>1/\lambda_{1}(A)$ and
$\tau_{2}>1/\lambda_{1}(B)$:

1. Will both viruses survive (coexistence) or will one virus completely remove
the other (mutual exclusion)?

2. Which characteristics of multi-layer network structure allow for coexistence?

These questions pertain to long term behaviors of competitive spreading
dynamics. To address these questions, we perform a steady-state analysis of
$SI_{1}SI_{2}S$ model. Specifically, bifurcation techniques are used to find
two critical values: \emph{survival threshold} and \emph{winning threshold} to
determine if a virus will survive and whether it can completely remove the
other virus. Significantly, we go beyond these threshold conditions and
examine interrelation of network layers. Using eigenvalue perturbation, we
find interrelations of dominant eigenvectors and node-degree vectors of
network layers are critical determinants in ultimate behaviors of competitive
viral dynamics.

\section{Main Results\label{Section: Mainresult}}

Given our stated objective to study long-term behavior of $SI_{1}SI_{2}S$
model for competitive viruses, we use bifurcation analysis to study the
steady-state behavior of $SI_{1}SI_{2}S$ model. Application of bifurcation
analysis to the SIS\ model of a single virus on a simple graph determines the
critical value at which a non-healthy equilibrium emerges \cite{van2009TN},
determining a survival threshold for the virus. Interestingly, no-spreading
threshold and survival threshold coincide for this SIS\ model. However, we
expect these two critical values are distinct for $SI_{1}SI_{2}S$ because a
virus may initially spread in an almost entirely susceptible population but
then die out from competition with a simultaneous virus having a sufficiently
stronger infection rate.

In fact, the survival threshold is larger than the no-spreading threshold,
monotonically increasing with the aggressivity of the other competitive virus.
Furthermore, a surviving virus can even be so aggressive to completely remove
the other virus. Consequently, competitive spreading induces an additional
threshold concept-the winning threshold-determining the critical value of
effective infection rate for a virus to prevail as sole survivor.

The determination of the two thresholds for each virus involves four
quantities. We are able to deduce winning thresholds from survival thresholds,
which then become our sole focus. Furthermore, with no loss of generality, we
only find survival threshold of virus $1$ because of expressions duality.

Unfortunately, any conclusive understanding of the system is hindered by the
complex interdependency of survival threshold of one virus on the multilayer
network topology and the aggressiveness of the competitive virus. While
complete analytical solution of survival threshold appears impossible, we
characterize possible solutions with explicit analytical expressions. This
step is a unique contribution to current understanding of competitive
spreading over multi-layer networks with solid and quantitative implications
on role of multilayer network topology.

\subsection{Threshold Equations}

Bifurcation analysis of $SI_{1}SI_{2}S$\ model equilibriums finds the survival
threshold. Our competitive virus propagation model (\ref{CSIS1}-\ref{CSIS2})
yields the equilibriums equations:%
\begin{align}
\frac{p_{1,i}^{\ast}}{1-p_{1,i}^{\ast}-p_{2,i}^{\ast}}  &  =\tau_{1}\sum
a_{ij}p_{1,j}^{\ast},\label{EqEq1}\\
\frac{p_{2,i}^{\ast}}{1-p_{1,i}^{\ast}-p_{2,i}^{\ast}}  &  =\tau_{2}\sum
b_{ij}p_{2,j}^{\ast}, \label{EqEq2}%
\end{align}
for $i\in\{1,...,N\}$. The healthy equilibrium (i.e., $p_{1,i}^{\ast}%
=p_{2,i}^{\ast}=0,\forall i$) is always a solution to the above equilibrium
equation (\ref{EqEq1}-\ref{EqEq2}). Long term persistence of infection in the
population is associated with non-zero solution for the equilibrium equations
\cite{van2009TN}. We use bifurcation theory to identify critical values for
effective infection rates $\tau_{1}$ and $\tau_{2}$ such that a second
equilibrium, aside from the healthy equilibrium, emerges. The critical value
for one virus is a function of the effective infection rate of the other
virus. Without loss of generality, we determine the survival threshold for
virus $1$ by finding the critical effective infection rate $\tau_{1c}$ as a
function of $\tau_{2}$.

\textbf{Definition:} Given virus $2$ effective infection rate ($\tau_{2}$),
the \emph{survival threshold value} $\tau_{1c}$ is the smallest effective
infection rate that virus $1$ steady state infection probability of each node
is positive for $\tau_{1}>\tau_{1c}$. For $\tau_{2}$ in $[0,+\infty)$ as an
independent variable, $\tau_{1c}$ constitutes a \emph{survival threshold
curve}, monotonically increasing function of $\tau_{2}$, denoted by $\Phi
_{1}(\tau_{2})$.

The above definition for survival threshold value indicates that exactly at
the threshold value, $p_{1,i}^{\ast}|_{\tau_{1}=\tau_{1c}}=0$ and
$\frac{dp_{1,i}^{\ast}}{d\tau_{1}}|_{\tau_{1}=\tau_{1c}}>0$ for all
$i\in\{1,...,N\}$. Taking the derivative of equilibrium equations
(\ref{EqEq1}) with respect to $\tau_{1}$, and defining
\begin{equation}
w_{i}\triangleq\frac{dp_{1,i}^{\ast}}{d\tau_{1}}|_{\tau_{1}=\tau_{1c}}%
,~y_{i}\triangleq p_{2,i}^{\ast}|_{\tau_{1}=\tau_{1c}},
\end{equation}
we find the survival threshold $\tau_{1c}$ is the value for which nontrivial
solution exists for $w_{i}>0$ in%
\begin{equation}
w_{i}=\tau_{1c}(1-y_{i})\sum a_{ij}w_{j}, \label{Thresh_Eq}%
\end{equation}
where $y_{i}$ is the solution of:%
\begin{equation}
\frac{y_{i}}{1-y_{i}}=\tau_{2}\sum b_{ij}y_{j}, \label{yi_eq}%
\end{equation}
according to equilibrium equation (\ref{EqEq2}).

Equation (\ref{Thresh_Eq}) is an eigenvalue problem. Among all the possible
solutions, only%
\begin{equation}
\tau_{1c}=\frac{1}{^{\lambda_{1}(diag\{1-y_{i}\}A)}} \label{tawc1_yi}%
\end{equation}
is acceptable; according to Perron-Frobenius Theorem, only the dominant
eigenvector of the matrix $diag\{1-y_{i}\}A$ has all positive entries,
yielding $w_{i}=\frac{dp_{1,i}^{\ast}}{d\tau_{1}}|_{\tau_{1}=\tau_{1c}}>0$.

The eigenvalue problem (\ref{Thresh_Eq}) gives a mathematical way to find the
survival threshold $\tau_{1c}$, depending on the value of $\tau_{2}$.
Unfortunately, this implicit dependence hinders clear understanding of the
propagation interplay between virus $1$ and virus $2$.

Finding $y_{i}$ for all possible values of $\tau_{2}$, then finding the
threshold value $\tau_{c1}$ from (\ref{tawc1_yi}), we obtain survival
threshold curve $\Phi_{1}(\tau_{2})$ for virus $1$. This curve divides the
region of $(\tau_{1},\tau_{2})$ into two regions, where virus $1$ survives and
one where virus $1$ extincts. We can use analogous equations to find survival
threshold curve $\Phi_{2}(\tau_{1})$ for virus $2$. Given $\tau_{2}$, we can
find $\tau_{1c}$ such that for $\tau_{1}>\tau_{1c}$, virus $1$ can survive.

We can think of another threshold, winning threshold, such that for $\tau
_{1}>\tau_{1}^{\dag}$, only virus $1$ can survive and virus $2$ is completely
suppressed. Interestingly, the winning threshold of virus $1$ is the value of
$\tau_{1}^{\dag}$, such that the survival threshold of virus $2$ is $\tau_{2}$
for $\tau_{1}=\tau_{1}^{\dag}$. Therefore, $\Psi_{1}(\cdot)$ is the inverse
function of $\Phi_{2}(\cdot)$, i.e.,%
\begin{equation}
\Psi_{1}(\tau_{2})=\Phi_{2}^{-1}(\tau_{2}). \label{inverse_threshold}%
\end{equation}
Therefore, finding the survival thresholds of both viruses also yields the
winning threshold curves. The two curves $\Phi_{1}(\tau_{2})$ and $\Phi
_{2}(\tau_{1})$ divide $(\tau_{1},\tau_{2})$ plane in four regions: where both
viruses extinct, where only virus $1$ survives, where only virus $2$ survives,
where both viruses survive and coexist. The coexisting region contains the
values of $(\tau_{1},\tau_{2})$ between survival threshold curves $\Phi
_{1}(\tau_{2})$ and $\Phi_{2}(\tau_{1})$.

\subsection{Characterization of Threshold Curves}

Complete analytical solution of survival threshold curves is not feasible.
Instead, we quantitate interrelations of contact layers to formulate our
analytical assertions. We describe conditions for viral coexistence through
attaining explicit analytical quantities giving conditions for mutual
exclusion and coexistence of viruses. Our approach to this problem finds
explicit solutions to (\ref{Thresh_Eq}) and (\ref{yi_eq}) for values of
$\tau_{2}$ close to $1/\lambda_{1}(B)$ and for very large values of $\tau_{2}$
to\ quantitate the survival epidemic curves. Since we know solution to
(\ref{yi_eq}) and the survival threshold value $\tau_{1c}$ at both extreme
values, we can employ eigenvalue perturbation techniques to find explicit
solutions for $\tau_{2}$ close to $1/\lambda_{1}(B)$ and $\tau_{2}$ very
large. Results for $\tau_{2}$ close to $1/\lambda_{1}(B)$ apply where
competitive viruses are \emph{non-aggressive}, whereas results for $\tau_{2}$
very large corresponds to \emph{aggressive} competition. Behavior of the
competitive spreading process for moderate aggressiveness is an interpolation
of the extreme scenarios of non-aggressive and aggressive propagation.

First, we perform perturbation analysis to find $\tau_{c1}$ for values of
$\tau_{2}$ close to $1/\lambda_{1}(B)$. We know at $\tau_{2}=1/\lambda_{1}%
(B)$, $y_{i}=0$ solves (\ref{yi_eq}), thus $\tau_{c1}=1/\lambda_{1}(A)$ is the
survival threshold according to (\ref{yi_eq}). For values of $\tau_{2}$ close
to $1/\lambda_{1}(B)$, we use eigenvalue perturbation technique and study
sensitivity of threshold equation (\ref{Thresh_Eq}) respective to deviation in
$\tau_{2}$ from $1/\lambda_{1}(B)$. As detailed in the Appendix, we find%
\begin{equation}
\frac{d\tau_{1c}}{d\tau_{2}}|_{\tau_{2}=\frac{1}{\lambda_{1}(B)}}%
=\frac{\lambda_{1}(B)}{\lambda_{1}(A)}\frac{\sum v_{A,i}^{2}v_{B,i}}{\sum
v_{B,i}^{3}},\label{dtaw1cdtaw2small}%
\end{equation}
expressing the dependency of virus $1$ survival threshold ($\tau_{1c}$) to
effective infection rate of virus $2$ ($\tau_{2}$) for values of $\tau_{2}$
close to $1/\lambda_{1}(B)$. Expression (\ref{dtaw1cdtaw2small}) consists of
two components: $\frac{\lambda_{1}(B)}{\lambda_{1}(A)}$, the spectral radius
ratio of each network layers in isolation, and $\frac{\sum v_{A,i}^{2}v_{B,i}%
}{\sum v_{B,i}^{3}}$, which determines the influence of interrelations of the
two layers. Significantly, if $\sum v_{A,i}^{2}v_{B,i}$ is small, expression
(\ref{dtaw1cdtaw2small}) suggests the virus $1$ survival threshold is not
influenced by virus $2$ infection rate. This has very interesting
interpretations: when spectral central nodes of $G_{A}$ (those nodes with
larger element in dominant eigenvector of $G_{A}$) are are spectrally
insignificant in $G_{B}$, the virus $1$ survival threshold does not increase
much by $\tau_{2}$. In other words, virus $2$ does not compete over accessible
resources of virus $1$, therefore, virus $1$ is not affected much by the
co-propagation. On the other hand, if spectral central nodes of $G_{A}$ have
high spectral centrality in $G_{B}$, then $\sum v_{A,i}^{2}v_{B,i}$ is maximal
indicating considerable dependency of survival threshold of virus $1$ on
aggressiveness of the other virus. From (\ref{dtaw1cdtaw2small}), the die-out
threshold curve $\Phi_{1}(\tau_{2})$ can be approximated close to $(\tau
_{2},\tau_{1})=(\frac{1}{\lambda_{1}(B)},\frac{1}{\lambda_{1}(A)})$ as%
\begin{equation}
\Phi_{1}(\tau_{2})\simeq\frac{1}{\lambda_{1}(A)}\{1+\frac{\sum v_{A,i}%
^{2}v_{B,i}}{\sum v_{B,i}^{3}}(\lambda_{1}(B)\tau_{2}-1)\}.
\end{equation}

\begin{figure}[ptb]
\centering
\includegraphics[ width=3.4 in]{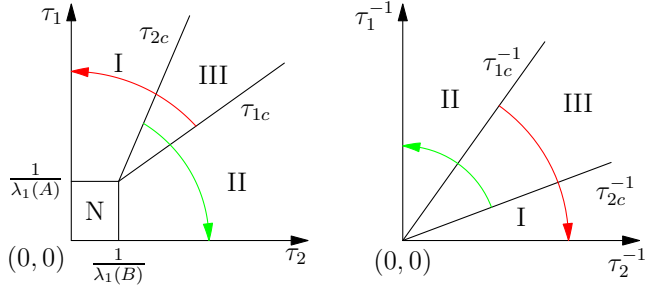}\caption{The survival
regions diagram in $SI_{1}SI_{2}S$ model for values of $(\tau_{1},\tau_{2})$
close to $(\frac{1}{\lambda_{1}(A)},\frac{1}{\lambda_{1}(B)})$ (left) and for
very large values of $(\tau_{1},\tau_{2})$ (right). The $SI_{1}SI_{2}S$ model
with two-layer contact topology exhibits four possibilities: extinction region
N where both viruses die-out, mutual extinction region I, where virus $1$
survives and virus $2$ dies out, mutual extinction region II, where only virus
2 survives and virus 1 dies out, and finally coexistence region III, where
both viruses survive and persist in the population. The red arrow shows the
survival region of virus $1$ (regions I and III) and the green arrow shows the
survival region of virus $2$ (regions II and III). For aggressive viruses
scenario, axes have inversed values of $(\tau_{1},\tau_{2})$ so that the
origin represents infinitely large values. Equations (\ref{dtaw1cdtaw2small})
and (\ref{taw1ctaw2large}) analytically find the separating lines between the
survival regions in explicit expressions.}%
\label{regionsextreme.eps}%
\end{figure}

Studying threshold equations (\ref{Thresh_Eq})-(\ref{yi_eq}) for $\tau
_{2}\rightarrow\infty$, we find $\frac{\tau_{1c}}{\tau_{2}}|_{\tau
_{2}\rightarrow\infty}$ is the inverse of the spectral radius of $D_{B}^{-1}A$
(see Appendix for detailed derivation):%
\begin{equation}
\frac{\tau_{1c}}{\tau_{2}}|_{\tau_{2}\rightarrow\infty}=\frac{1}{\lambda
_{1}(D_{B}^{-1}A)}=\frac{1}{\lambda_{1}(D_{B}^{-1/2}AD_{B}^{-1/2}%
)},\label{taw1ctaw2large}%
\end{equation}
expressing the dependency of virus $1$ survival threshold ($\tau_{1c}$) on
effective infection rate of virus $2$ ($\tau_{2}$) for large values of
$\tau_{2}$. This expression (\ref{taw1ctaw2large}) directly highlights the
influence of interrelations of the two layers. Significantly, if $\lambda
_{1}(D_{B}^{-1}A)$ is large, expression (\ref{dtaw1cdtaw2small}) suggests that
virus $1$ survival threshold does not increase significantly by virus $2$
infection rate. Similar arguments about interpretation of
(\ref{dtaw1cdtaw2small}) apply to aggressive competitive viruses where
$\tau_{1}$ and $\tau_{2}$ are relatively large. The main difference in case of
aggressive competitive spreading is that node degree is the determinant of
centrality. From (\ref{taw1ctaw2large}), the die-out threshold curve $\Phi
_{1}(\tau_{2})$ asymptotically becomes%
\begin{equation}
\Phi_{1}(\tau_{2})\simeq\frac{1}{\lambda_{1}(D_{B}^{-1}A)}\tau_{2},
\end{equation}
for aggressive competitive propagation. FIG. \ref{regionsextreme.eps} depicts
survival threshold curves for non-aggressive (left) and aggressive (right)
competitive spreading.

We prove conditions for coexistence by showing there is overlapping between
regions where viruses survive.

\begin{theorem}
In $SI_{1}SI_{2}S$ model (\ref{CSIS1}-\ref{CSIS2}) for competitive epidemics
over multi-layer networks, if the two network layers $G_{A}$ and $G_{B}$ are
identical, coexistence is impossible, i.e., a virus with even a slightly
larger effective infection rate dominates and completely removes the other
virus. Otherwise, if node-degree vectors of $G_{A}$ and $G_{B}$ are not
parallel, i.e., $\boldsymbol{d}_{A}\not =c\boldsymbol{d}_{B}$, or dominant
eigenvectors of $G_{A}$ and $G_{B}$ do not completely overlap, i.e.,
$\boldsymbol{v}_{A}\not =\boldsymbol{v}_{B}$ the multi-layer structure of the
underlying topology allows a nontrivial coexistence region.
\end{theorem}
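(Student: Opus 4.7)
The plan is to recast coexistence as a geometric separation condition on the two survival threshold curves and then verify that separation in the two regimes for which explicit formulas are already available. Since virus~$1$ survives iff $\tau_1>\Phi_1(\tau_2)$ and virus~$2$ survives iff $\tau_2>\Phi_2(\tau_1)$, a nontrivial coexistence region is exactly the open set $\{(\tau_2,\tau_1):\Phi_1(\tau_2)<\tau_1<\Phi_2^{-1}(\tau_2)\}$, so the question reduces to whether the curve $\Phi_1$ ever lies strictly below $\Phi_2^{-1}$ as functions of $\tau_2$. Both curves emanate from the common point $(1/\lambda_1(B),1/\lambda_1(A))$ and, by (\ref{taw1ctaw2large}), both have linear asymptotes as $\tau_2\to\infty$, so it suffices to compare their slopes at the endpoint and at infinity.

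For the identical-layers case $A=B$, I would argue directly at the level of equilibria of (\ref{EqEq1})--(\ref{EqEq2}). An interior coexistence fixed point, after dividing both equations by $1-p_{1,i}^\ast-p_{2,i}^\ast$, exhibits $\boldsymbol{p}_1^\ast$ and $\boldsymbol{p}_2^\ast$ as two strictly positive eigenvectors of the common nonnegative irreducible matrix $\mathrm{diag}\{1-p_{1,i}^\ast-p_{2,i}^\ast\}A$, with eigenvalues $1/\tau_1$ and $1/\tau_2$ respectively. By Perron--Frobenius, such a matrix admits a unique positive eigenvector (up to scaling), corresponding to its spectral radius, forcing $1/\tau_1=1/\tau_2$. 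Hence coexistence is impossible for $\tau_1\neq\tau_2$, and the virus with a marginally larger effective rate extinguishes the competitor.

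For the non-identical case with $\boldsymbol{v}_A\neq\boldsymbol{v}_B$, I would compare slopes at the shared endpoint using (\ref{dtaw1cdtaw2small}). The slope of $\Phi_1$ there is $s_1=\frac{\lambda_1(B)}{\lambda_1(A)}\frac{\sum v_{A,i}^2 v_{B,i}}{\sum v_{B,i}^3}$, and applying the same formula with the roles of the two viruses swapped and then inverting gives the slope of $\Phi_2^{-1}$ as $\frac{\lambda_1(B)}{\lambda_1(A)}\frac{\sum v_{A,i}^3}{\sum v_{A,i}v_{B,i}^2}$. Cross-multiplying, the separation condition reduces to
\begin{equation*}
\Bigl(\textstyle\sum v_{A,i}^2 v_{B,i}\Bigr)\Bigl(\textstyle\sum v_{A,i} v_{B,i}^2\Bigr)<\Bigl(\textstyle\sum v_{A,i}^3\Bigr)\Bigl(\textstyle\sum v_{B,i}^3\Bigr),
\end{equation*}
which I would establish by two applications of Cauchy--Schwarz: first to $(v_{A,i}^{3/2})$ and $(v_{A,i}^{1/2}v_{B,i})$ to bound $(\sum v_{A,i}^2 v_{B,i})^2$, then symmetrically for $(\sum v_{A,i}v_{B,i}^2)^2$, and multiplying. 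Because both Perron vectors are strictly positive and unit-normalized, the Cauchy--Schwarz equality case collapses exactly to $\boldsymbol{v}_A=\boldsymbol{v}_B$, so the hypothesis $\boldsymbol{v}_A\neq\boldsymbol{v}_B$ yields a strict inequality and therefore an open coexistence wedge near the joint critical point.

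For the non-identical case with $\boldsymbol{d}_A\not\propto\boldsymbol{d}_B$, I would use (\ref{taw1ctaw2large}): the asymptotic separation $\Phi_1(\tau_2)<\Phi_2^{-1}(\tau_2)$ becomes $\lambda_1(D_B^{-1}A)\,\lambda_1(D_A^{-1}B)>1$. From the symmetric characterization $\lambda_1(D_B^{-1}A)=\sup_{x>0}\,x^{T}Ax/(x^{T}D_Bx)$, evaluating at the test vector $x=\boldsymbol{1}$ gives $\lambda_1(D_B^{-1}A)\ge\sum d_{A,i}/\sum d_{B,i}$, with equality iff $\boldsymbol{1}$ is the Perron eigenvector of $D_B^{-1}A$, which by $A\boldsymbol{1}=\lambda D_B\boldsymbol{1}$ is equivalent to $\boldsymbol{d}_A\propto\boldsymbol{d}_B$. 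Multiplying the analogous bound for $\lambda_1(D_A^{-1}B)$ yields a product that is always $\ge 1$ and strictly exceeds $1$ whenever $\boldsymbol{d}_A\not\propto\boldsymbol{d}_B$, producing an asymptotic coexistence region. The main delicate step I anticipate is the strict-equality analysis in both the Cauchy--Schwarz and the Rayleigh bounds, since one needs the equality cases to collapse exactly to the parallelism hypotheses in the theorem; promoting a slope inequality at a single (boundary or asymptotic) point to an honest open coexistence region is then a routine continuity argument, relying on the $C^1$ smoothness of $\Phi_1$ and $\Phi_2$ provided by the perturbation derivations in the appendix.
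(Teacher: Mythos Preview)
Your proposal is correct and follows the paper's overall architecture exactly: treat the identical-layers case separately, then establish coexistence in the non-aggressive regime via the slope product condition at $(\tau_1,\tau_2)=(1/\lambda_1(A),1/\lambda_1(B))$ and in the aggressive regime via the asymptotic ratio product. The differences are in the supporting tools rather than the strategy. For the identical-layers case, the paper observes directly that $\tau_{1c}=\tau_2$ solves (\ref{Thresh_Eq}) when $A=B$, so $\Phi_1$ and $\Phi_2^{-1}$ coincide; your Perron--Frobenius uniqueness argument on the equilibrium equations (\ref{EqEq1})--(\ref{EqEq2}) reaches the same conclusion and is arguably cleaner, since it rules out a coexistence fixed point without passing through the threshold formalism. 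For the non-aggressive inequality, the paper applies H\"older with exponents $(3,3/2)$ to each factor separately, whereas your Cauchy--Schwarz pairing $(v_{A,i}^{3/2},v_{A,i}^{1/2}v_{B,i})$ achieves the same bound after multiplying and cancelling; the equality cases collapse identically to $\boldsymbol{v}_A=\boldsymbol{v}_B$. For the aggressive regime, the paper takes a detour through the Kronecker product $A\otimes B$ and a lemma on diagonal permutations of degree matrices, while your direct Rayleigh-quotient bound $\lambda_1(D_B^{-1}A)\ge \sum d_{A,i}/\sum d_{B,i}$ (with equality iff $\boldsymbol{d}_A\propto\boldsymbol{d}_B$) is more elementary and gives the product inequality immediately. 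Both routes are sound; yours is somewhat more economical.
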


\begin{proof}
If $G_{A}=G_{B}$, then equation (\ref{yi_eq}) suggests $\tau_{c1}=\tau_{2}$
solves threshold equation (\ref{Thresh_Eq}). Similarly $\tau_{2c}=\tau_{1}$,
suggesting $\tau_{1}^{\dag}=\tau_{2}$ according to (\ref{inverse_threshold}),
i.e., survival and winning thresholds coincide. Therefore, the virus with even
a slightly larger effective infection rate dominates and completely removes
the other virus if the two network layers are identical.

In order to show possibility of coexistence for non-aggressive competitive
viruses, we show the survival regions overlap by proving%
\begin{equation}
\frac{d\tau_{1,c}}{d\tau_{2}}.\frac{d\tau_{2,c}}{d\tau_{1}}|_{(\tau_{1}%
,\tau_{2})=(\frac{1}{\lambda_{1}(A)},\frac{1}{\lambda_{1}(B)})}<1.
\label{coexistence_1}%
\end{equation}
Using expression (\ref{dtaw1cdtaw2small}) and its counterpart for $\frac
{d\tau_{2,c}}{d\tau_{1}}$ (see Appendix), we find condition
(\ref{coexistence_1}) is always true except for the special case where
dominant eigenvectors of $G_{A}$ and $G_{B}$ completely overlap, i.e.,
$\boldsymbol{v}_{A}=\boldsymbol{v}_{B}$.

In order to show possibility of coexistence for aggressive competitive
viruses, we show the survival regions overlap by proving%
\begin{equation}
\frac{\tau_{1c}}{\tau_{2}}|_{\tau_{2}\rightarrow\infty}\times\frac{\tau_{2c}%
}{\tau_{1}}|_{\tau_{2}\rightarrow\infty}<1. \label{coexistence_2}%
\end{equation}

Using expression (\ref{taw1ctaw2large}) and its counterpart for $\frac
{\tau_{2c}}{\tau_{1}}|_{\tau_{2}\rightarrow\infty}$ (see Appendix), we find
that condition (\ref{coexistence_2}) is always true except for the special
case where node-degree vectors of $G_{A}$ and $G_{B}$ are parallel, i.e.,
$\boldsymbol{d}_{A}=c\boldsymbol{d}_{B}$.
\end{proof}

When dominant eigenvectors of $G_{A}$ and $G_{B}$ are not identical, condition
(\ref{coexistence_1}) indicates non-aggressive viruses can coexist. When
propagation of competitive viruses is aggressive, condition
(\ref{coexistence_2}) indicates viruses can coexist if node-degree vectors of
$G_{A}$ and $G_{B}$ are not parallel. However, the rare scenario where $G_{A}$
and $G_{B}$ are not identical and $\boldsymbol{d}_{A}=c\boldsymbol{d}_{B}$ and
$\boldsymbol{v}_{A}=\boldsymbol{v}_{B}$ hold simultaneously demands further exploration.

The above theorem and equations (\ref{dtaw1cdtaw2small}) and
(\ref{taw1ctaw2large}) prove the importance of interrelation of network
layers. As will be discussed in the simulation section, one approach capturing
only the effect of interrelation is generating multilayer networks from two
graphs $G_{A}$ and $G_{B}$ through simple relabeling vertices of $G_{B}$. We
thus have a set of multilayer networks whose layers have identical graph
properties but correpondence of nodes in one layer to the nodes of the other varies.

In the context of competitive spreading, whether memes, opinions, or products,
the population under study serves as the `resource' for the competitive
entities, relating nicely to the concept of `competing species' in ecology.
Longterm study of competing species in ecology centers on the `competitive
exclusion principle' \cite{hardin1960S}: \emph{Two species competing for the
same resources cannot coexist indefinitely under identical ecological factors.
The species with the slightest advantage or edge over another will dominate
eventually}. Our \textit{SI}$_{1}$\textit{SI}$_{2}$\textit{S} model also
predicts when the network layers are identical, coexistence is not possible.
Significantly, different propagation routes break this `ecological symmetry,'
allowing coexistence. Not only have we rigorously proved a coexistence region,
we quantitated this ecological asymmetry via interrelation of central nodes
across the network layers. None or small overlapping of central nodes of each
layer is the key determinant of coexistence. Excitingly, this conclusion
nicely relates to `niche differentiation' in ecology and yet is built upon
network science rigor.

\subsection{Standardized Threshold Diagram and a Global Approximate Formula}

Exploring efficient characterization of threshold curves using extreme
scenarios, we propose a standardized threshold diagram, where threshold curves
are plotted in a $[0,1]\times\lbrack0,1]$ plane for $(x,y)=(\frac{1}%
{\lambda_{1}(B)\tau_{2}},\frac{1}{\lambda_{1}(A)\tau_{1}})$, axes scaled by
layer spectral radius and inverted. Curves in standardized threshold diagram
start from origin to point $(1,1)$. From (\ref{dtaw1cdtaw2small}) and
(\ref{taw1ctaw2large}) the slopes of the survival curve of virus $1$ at
$(0,0)$ and $(1,1)$ are%
\begin{align}
m_{0} &  =\frac{\lambda_{1}(B)}{\lambda_{1}(A)}\lambda_{1}(D_{B}%
^{-1}A),\label{m0}\\
m_{1} &  =\frac{\sum v_{A,i}^{2}v_{B,i}}{\sum v_{B,i}^{3}},\label{m1}%
\end{align}
respectively. Importantly, these slopes help creating a parametric
approximation for the survival threshold curve $\tau_{1c}=\Phi_{1}(\tau_{2})$
for the full range of $\tau_{2}$. We use a quadratic Bezier curve
\begin{equation}%
\begin{bmatrix}
x\\
y
\end{bmatrix}
=2\sigma(1-\sigma)%
\begin{bmatrix}
a\\
b
\end{bmatrix}
+\sigma^{2}%
\begin{bmatrix}
1\\
1
\end{bmatrix}
,\label{Bezier}%
\end{equation}
connecting $(x,y)=(0,0)$ to $(x,y)=(1,1)$ for $\sigma\in\lbrack0,1]$, and
satisfying the slope constraints (\ref{m0}) and (\ref{m1}), if $a$ and $b$ are
chosen as:
\begin{equation}
a=\frac{1-m_{1}}{m_{0}-m_{1}},b=\frac{m_{0}(1-m_{1})}{m_{0}-m_{1}}.
\end{equation}

Therefore, the Bezier curve (\ref{Bezier}) approximates the standardized
threshold curve diagram for the whole range of $\tau_{1}>1/\lambda_{1}(A)$ and
$\tau_{2}>1/\lambda_{1}(B)$ using only spectral information of a set of matrices.

\subsection{Multi-layer Network Metric for Competitive Spreading}

Proving coexistence is one of the key contributions of this paper. We go
further to define a topological index $\Gamma_{s}(\mathcal{G})$ quantifying
possibility of coexistence in a multi-layer network $\mathcal{G=(}%
V,E_{A},E_{B}\mathcal{)}$ for the case of non-aggressive spreading as%
\[
\Gamma_{s}(\mathcal{G})=1-\frac{(\sum v_{B,i}v_{A,i}^{2})(\sum v_{A,i}%
v_{B,i}^{2})}{(\sum v_{B,i}^{3})(\sum v_{A,i}^{3})}.
\]
Values of $\Gamma_{s}(\mathcal{G})$ vary from $0$ (corresponding to the case
where $\boldsymbol{v}_{A}=\boldsymbol{v}_{B}$) to $1$. Values of $\Gamma
_{s}(\mathcal{G})$ close to zero imply coexistence is rare and any survived
virus is indeed the absolute winner. $\Gamma_{s}(\mathcal{G})$ closer to $1$
indicates coexistence is very possible on $\mathcal{G}$. Therefore,
$\Gamma_{s}(\mathcal{G})$ can be used to discuss coexistence of non-aggressive
competitive viruses.

Similar to non-aggressive competitive spreading, we can define a topological
index $\Gamma_{l}(\mathcal{G})$ to quantify coexistence possibility in a
multi-layer network $\mathcal{G=(}V,E_{A},E_{B}\mathcal{)}$ as%
\[
\Gamma_{l}(\mathcal{G})=1-\frac{1}{\lambda_{1}(D_{B}^{-1}A)}.\frac{1}%
{\lambda_{1}(D_{A}^{-1}B)}.
\]

Values of $\Gamma_{l}(\mathcal{G})$ vary from $0$ (corresponding to the case
where $\boldsymbol{d}_{A}=c\boldsymbol{d}_{B}$) to $1$. Values of $\Gamma
_{l}(\mathcal{G})$ close to zero imply coexistence is rare and any survived
virus is indeed the absolute winner. $\Gamma_{l}(\mathcal{G})$ closer to $1$
indicates coexistence is very possible on $\mathcal{G}$. Therefore,
$\Gamma_{l}(\mathcal{G})$ can be used to discuss coexistence of aggressive
competitive viruses.

\subsection{Numerical Simulations}

\textbf{Multi-layer network generation: }Our objective for numerical
simulations is not only to test our analytical formulae, but also to
investigate our prediction of cross-layer interrelation effect on competitive
epidemics. This demands a set of two-layer networks for which isolated layers
have identical graph properties but how these layers are interrelated is
different, hence capturing the \emph{pure effect of interrelation}.
Specifically, in the following numerical simulations, the contact network
$G_{A}$ through which virus $1$ propagates is a random geometric graph with
$N=1000$ nodes, where pairs less than $r_{c}=\sqrt{\frac{3\log(N)}{\pi N}}$
apart connect to ensure connectivity. For the contact graph of virus $2$
($G_{B}$), we first generated a scale-free network according to the
Barab\'{a}si--Albert model. We then used a randomized greedy algorithm to
associate the nodes of this graph with the nodes of $G_{A}$, approaching a
certain degree correlation coefficient $\rho$ with $G_{A}$, i.e., each
iteration step permutates nodes when the degree correlation coefficient%
\[
\rho(\mathcal{G})=\frac{\sum(d_{A,i}-\bar{d}_{A})(d_{B,i}-\bar{d}_{B})}%
{\sqrt{\sum(d_{A,i}-\bar{d}_{A})^{2}}\sqrt{\sum(d_{B,i}-\bar{d}_{B})^{2}}},
\]
is closer to the desired value. Specifically, we obtained three different
permutations where the generated graphs are negatively ($\rho=-0.47$),
neutrally ($\rho=0$), and positively ($\rho=0.48$) correlated with $G_{A}$.
These three graphs have \textit{identical graph properties, yet they are
distinct respective to }$G_{A}$. FIG. \ref{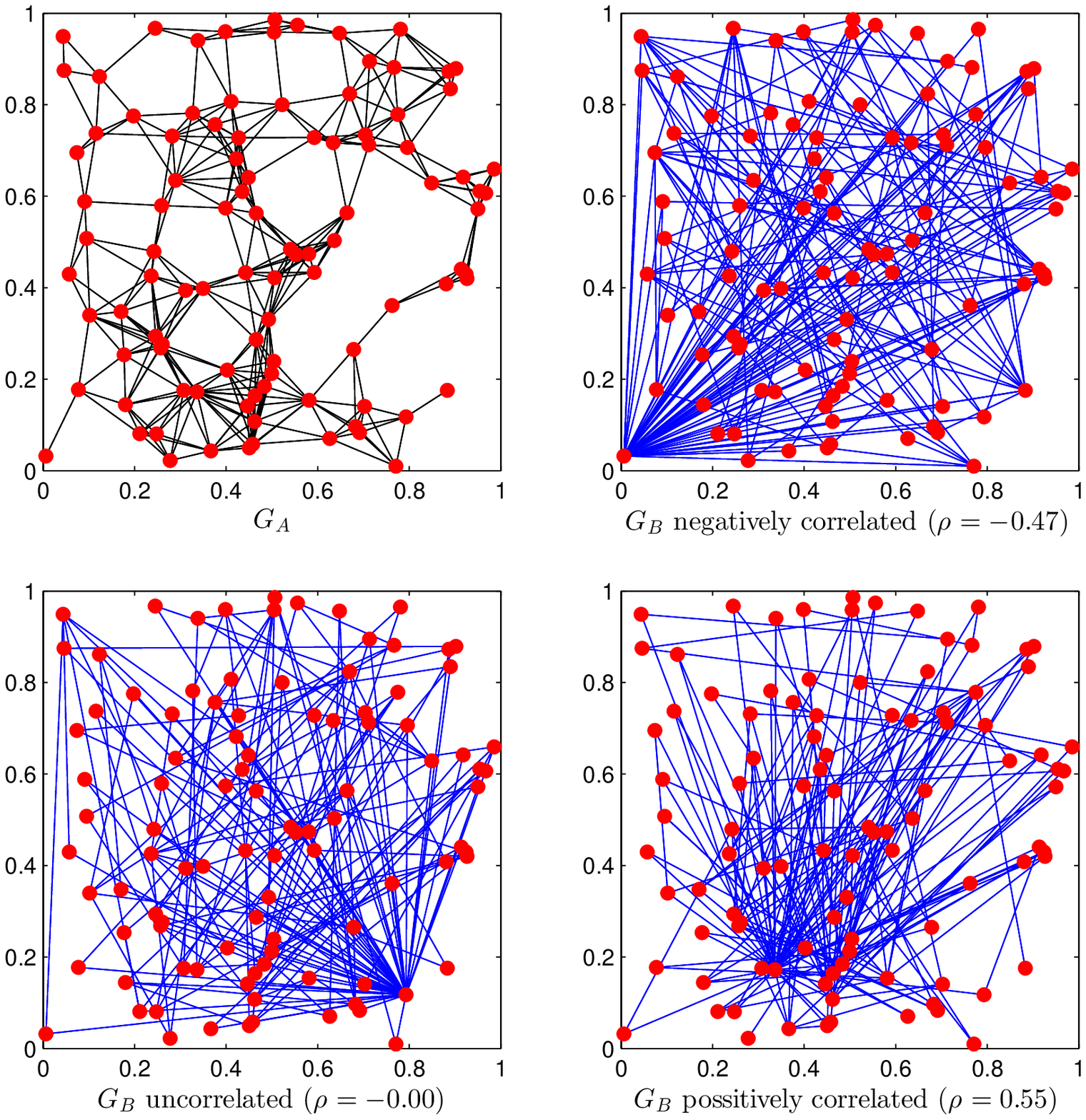} depicts a graph
$G_{A}$ and three graphs of $G_{B}$ with $N=100$ nodes to improve conceptualization.

\begin{figure}[ptb]
\centering
\includegraphics[ width=3.4 in]{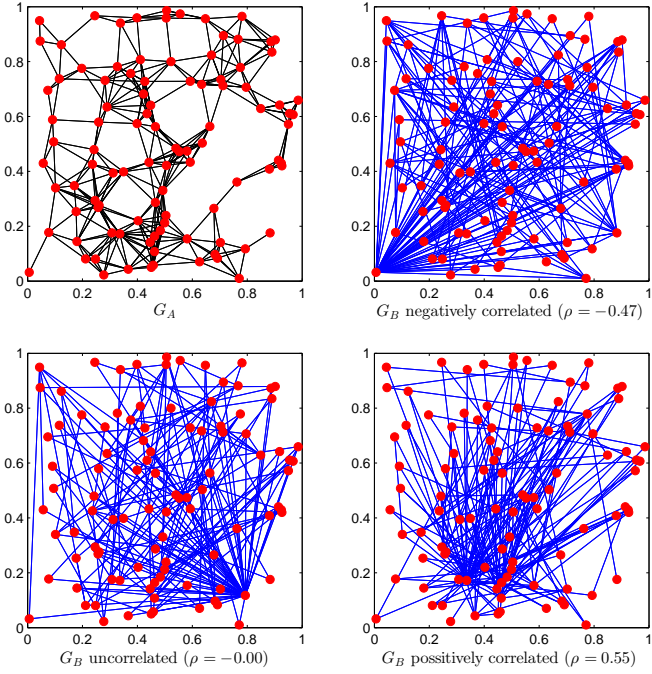}\caption{\emph{Two-layer
network generation for numerical simulations is generated here.} The contact
network $G_{A}$ through which virus $1$ propagates is a random geometric graph
where pairs of nodes with a distance less than $r_{c}$ are connected to each
other. For visualization convenience, the number of nodes is $N=100$, which is
different from the actual $N=1000$ used for numerical simulation results. For
the contact graph of virus $2$ ($G_{B}$), we first generated a scale-free
network according to the B-A model, associating the nodes of this graph with
the nodes of $G_{A}$ to achieve a certain degree correlation coefficient with
$G_{A}$. Specifically, we obtained three different permutations such that the
generated graphs are negatively, neutrally, and positively correlated with
$G_{A}$. These three graphs are the same if isolate, and distinct in their
interrelation with $G_{A}$. The high degree nodes in the positively correlated
$G_{B}$ (lower right) have also high degree in $G_{A}$ (upper left), while the
high degree nodes in the negatively correlated $G_{B}$ (upper right) have low
degree size in $G_{A}$. The uncorrelated $G_{B}$ (lower left) shows no clear
association.}%
\label{networkperm.eps}%
\end{figure}

\textbf{Steady-state infection fraction:} When the spreading of a single virus
is modeled as SIS, the steady-state infection fraction $\bar{p}^{ss}=\frac
{1}{N}\sum p_{i}$ illustrates a threshold phenomena respective to effective
infection rates: steady-state infection fraction $\bar{p}^{ss}$ is zero for
effective infection rates less than a critical value but becomes positive for
larger values. When two viruses compete to spread, steady state infection
fraction $\bar{p}_{1}^{ss}=\frac{1}{N}\sum p_{1,i}$ of virus $1$ in the
$SI_{1}SI_{2}S$ model exhibits a threshold behavior at $\tau_{1}=\tau_{1c}$,
for a given $\tau_{2}$. FIG. \ref{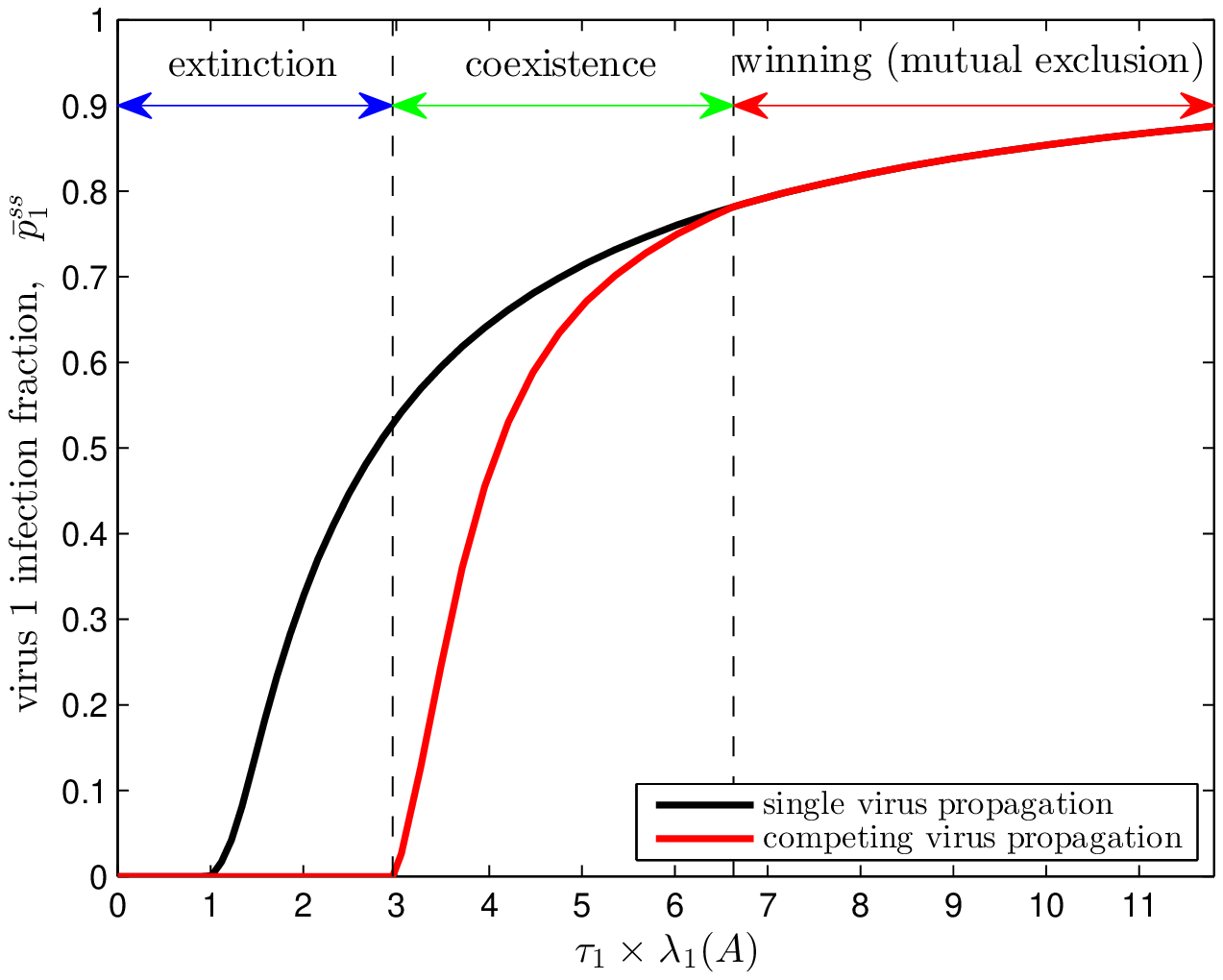} depicts the steady
state infection fraction curve of virus $1$ in the $SI_{1}SI_{2}S$ competitive
spreading model. In this simulation, effective infection rate of virus $2$ is
fixed at $\tau_{2}=6\frac{1}{\lambda_{1}(B)}$ and $G_{B}$ is positively
correlated with $G_{A}$ ($\rho=0.48$). In order to obtain a unified form, we
normalized the horizontal axis to $\tau_{1}\lambda_{1}(A)$. The steady state
infection fraction of virus $1$, $\bar{p}_{1}^{ss}$, is zero for $\tau_{1}%
\leq\tau_{1c}\simeq3\frac{1}{\lambda_{1}(A)}$, identifying this range as an
extinction region for virus $1$, while $\bar{p}_{1}^{ss}$ is positive for
$\tau_{1}>\tau_{1c}$ indicating survival of virus $1$. Interestingly, aside
from the survival threshold $\tau_{1c}$, the winning threshold $\tau_{1}%
^{\dag}$ appears in the figure when plotted against a single virus case:
$\bar{p}_{1}^{ss}$ takes the same values as the single virus case for
effective infection rates larger than the winning threshold $\tau_{1}^{\dag}$.
For example FIG. \ref{ssinfectionpositive.eps} shows $\bar{p}_{1}^{ss}$ in the
competitive scenario (red curve) is exactly similar to the case of single
virus propagation (black curve) for $\tau_{1}>\tau_{1}^{\dag}\simeq6.6\frac
{1}{\lambda_{1}(A)}$. Hence, this region is identified as the absolute winning
range for virus $1$. For $\tau_{1}\in(\tau_{1c},\tau_{1}^{\dag})$, virus $1$
and virus $2$ each persist in the population, marking this range as the
coexistence region.

\begin{figure}[ptb]
\centering
\includegraphics[ width=3.4 in]{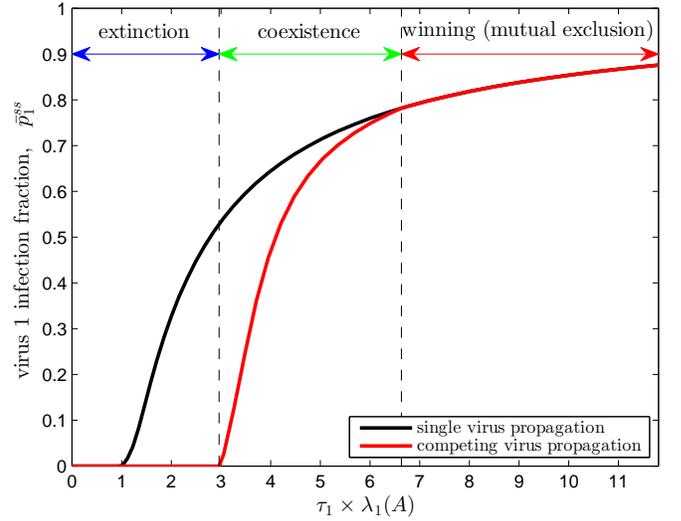}\caption{\emph{Steady
state infection fraction curve of virus }$1$\emph{\ in the }$SI_{1}SI_{2}%
S$\emph{\ competing spreading model (red).} While increasing $\tau_{1}$,
steady state infection fraction of virus $1$ in the the $SI_{1}SI_{2}S$ model
becomes nonzero at the survival threshold $\tau_{1c}$, while it coincides with
that of the $SIS$ model (black curve) at the winning threshold $\tau_{1}%
^{\dag}$. In this simulation, the steady-state infection fraction of virus $1$
($\bar{p}_{1}^{ss}$) is zero for $\tau_{1}\leq\tau_{1c}\simeq3\frac{1}%
{\lambda_{1}(A)}$, an extinction region for virus $1$. Interestingly, for
$\tau_{1}>\tau_{1}^{\dag}\simeq6.6\frac{1}{\lambda_{1}(A)}$, $\bar{p}_{1}%
^{ss}$ for the competitive scenario (red curve) is identical to the case of
single virus propagation (black curve), suggesting extinction of virus $2$,
hence marking this region as the winning range for virus $1$. For $\tau_{1}%
\in(\tau_{1c},\tau_{1}^{\dag})$, virus $1$ and virus $2$ both persist in the
population, marking this range for coexistence region.}%
\label{ssinfectionpositive.eps}%
\end{figure}

FIG. \ref{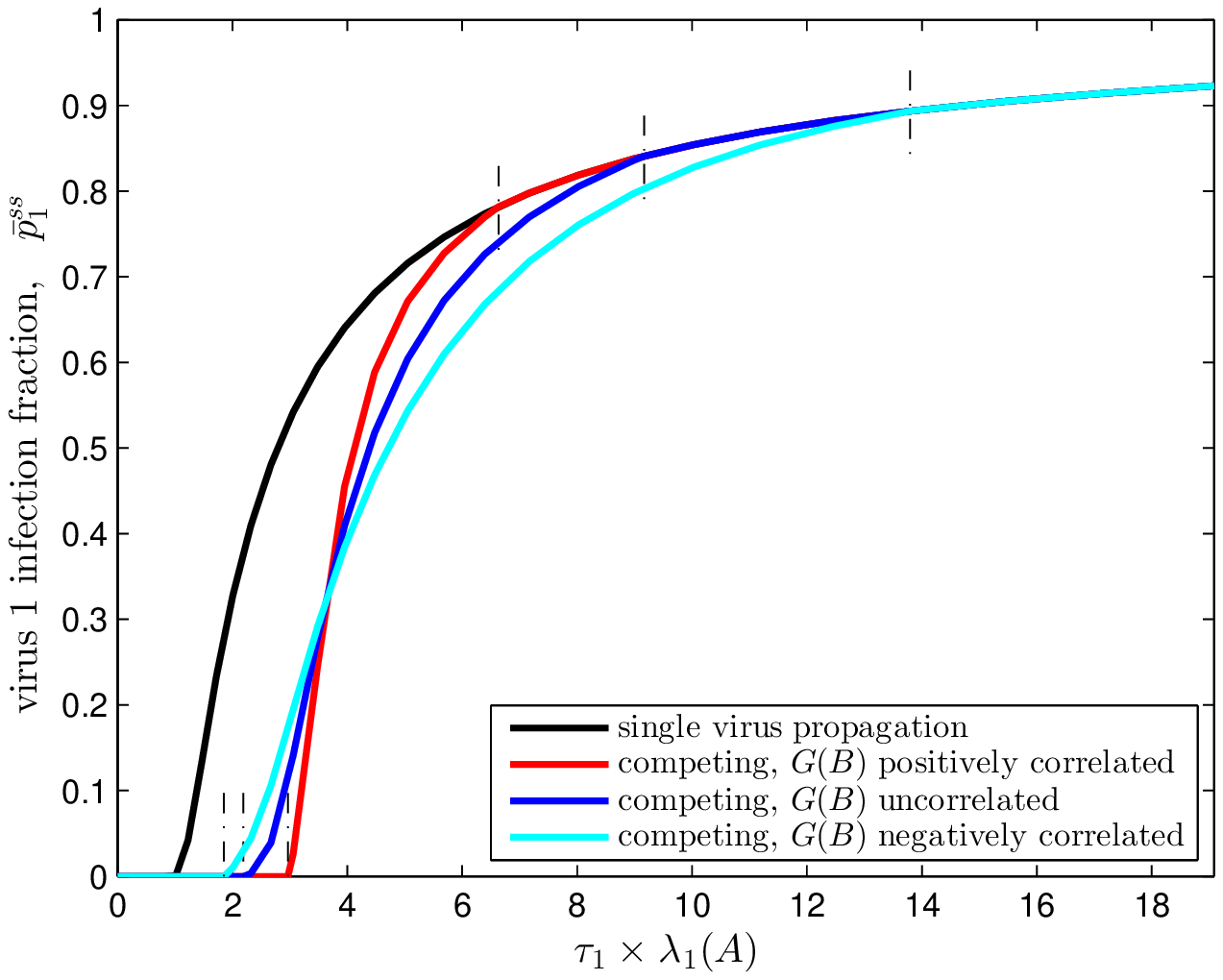} illustrates the dependency of steady-state
infection fraction curve on network layer interrelation. When the contact
network of virus $2$ ($G_{B}$) is positively correlated with that of virus $1$
($G_{A}$), it is more difficult for virus $1$ to survive, making the survival
threshold $\tau_{1c}$ relatively larger for positively correlated $G_{B}$.
Negatively correlated contact network layers impede virus $1$ from completely
suppressing virus $2$, making winning threshold $\tau_{1}^{\dag}$ larger for
negatively correlated $G_{B}$.

\begin{figure}[ptb]
\centering
\includegraphics[ width=3.4 in]{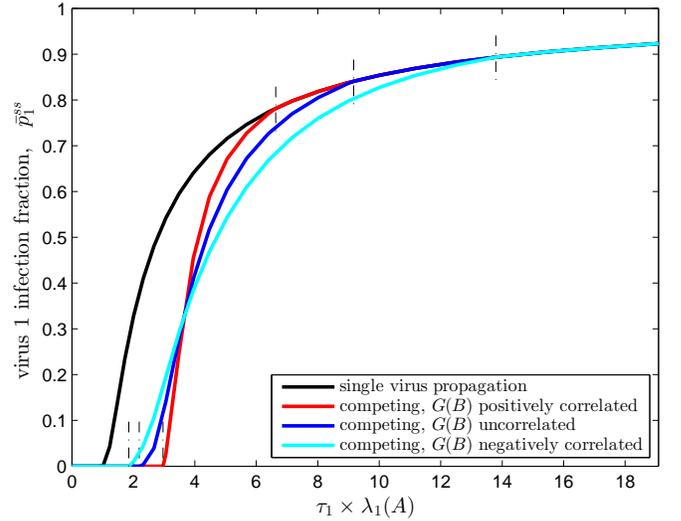}\caption{\emph{Comparison
of steady-state infection fraction curves of virus }$1$\emph{\ in the }%
$SI_{1}SI_{2}S$\emph{\ competitive spreading model.} Survival threshold
$\tau_{1c}$ is larger for positively correlated $G_{B}$, indicating it is more
difficult to survive positively correlated $G_{B}$, while $\tau_{1}^{\dag}$ is
larger for negatively correlated $G_{B}$, indicating it is more difficult to
completely suppress the other virus in negatively correlated $G_{B}$.}%
\label{ssinfectionall.eps}%
\end{figure}

\textbf{Survival diagram:} Allowing variation of $\tau_{2}$, the steady-state
infection curve extends to the steady-state infection surface. FIG.
\ref{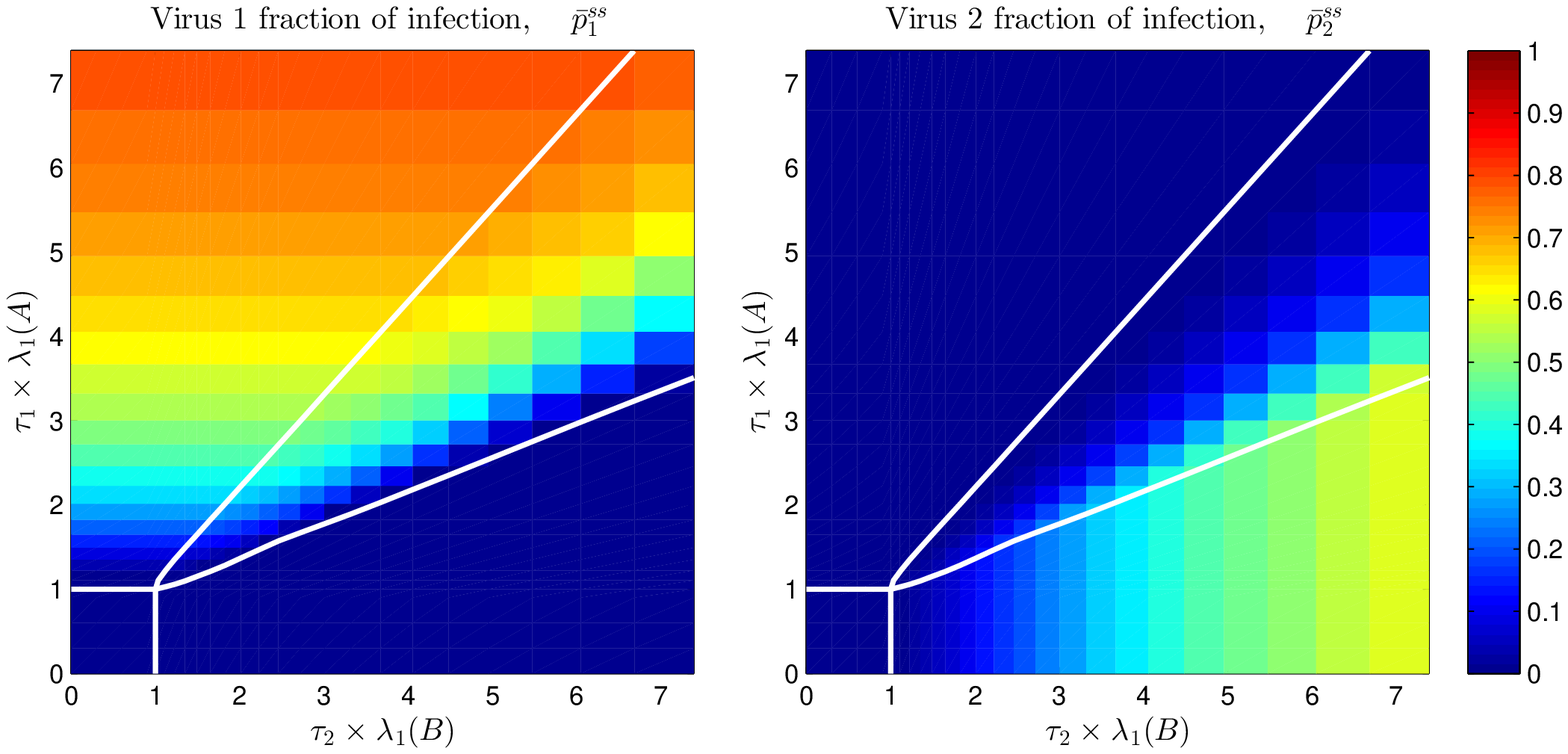} plots steady-state infection fraction for virus $1$
and virus $2$ as a function of $\tau_{1}$ and $\tau_{2}$. White curves
represent theoretical threshold curves derived from the solution to
(\ref{Thresh_Eq}), accurately separating the survival regions.

\begin{figure}[ptb]
\centering
\includegraphics[ width=3.4 in]{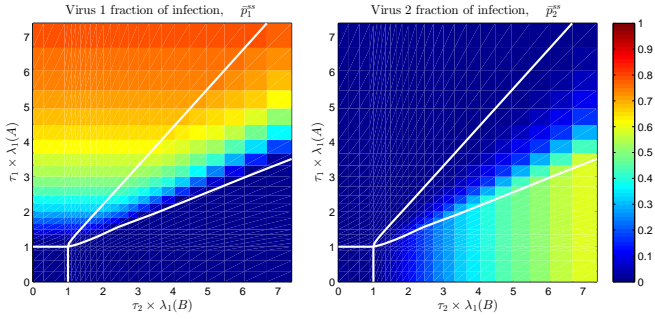}\caption{\emph{Steady state
fraction of infection for virus }$1$\emph{\ (left) and virus }$2$%
\emph{\ (right) as a function of }$\tau_{1}$\emph{\ and }$\tau_{2}$\emph{.}
The white lines are theoretical threshold curves accurately separating the
survival regions.}%
\label{ineffectionmap.eps}%
\end{figure}

FIG. \ref{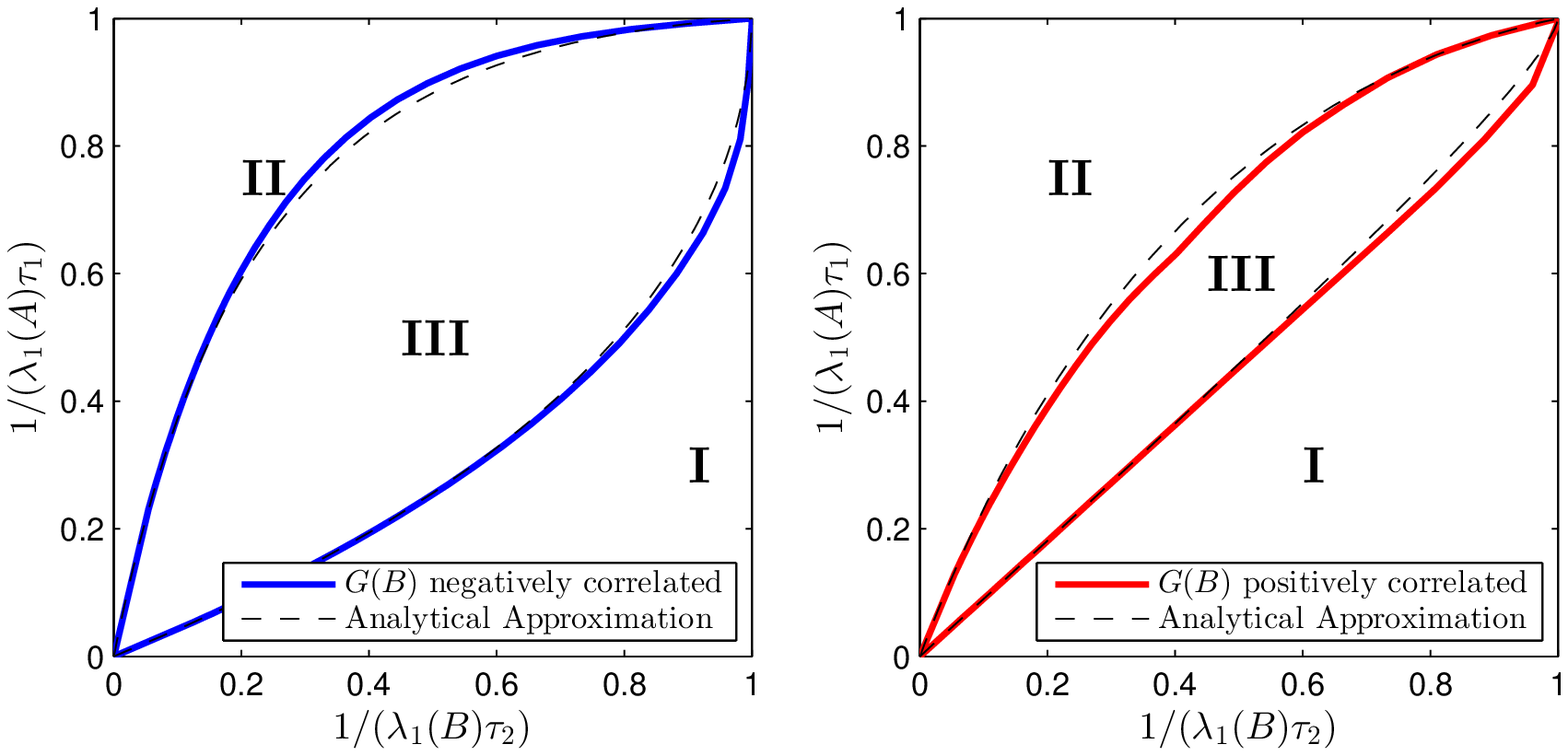} plots standardized threshold diagram where
$G_{B}$ is negatively correlated with $G_{A}$ (left) and $G_{B}$ is positively
correlated with $G_{A}$ (right). Predictions from analytical approximation
formula (\ref{Bezier}) find the threshold curves fairly accurately.

\begin{figure}[ptb]
\centering
\includegraphics[ width=3.4 in]{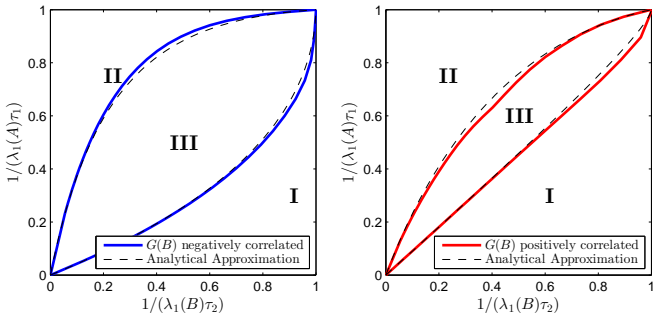}\caption{Standardized
threshold diagram for case where $G_{B}$ is negatively correlated with $G_{A}$
(left) and the case where $G_{B} $ is positively correlated with $G_{A}$
(right). Dashed lines are the predictions from analytical approximation
formula explicitly expressed in (\ref{Bezier}). Standardized threshold diagram
shows three survival regions: mutual extinction region I, where only virus $1$
survives and virus $2$ dies out, mutual extinction region II, where only virus
$2$ survives and virus $1$ dies out, and finally coexistence region III, where
both viruses survive and persist in the population.}%
\label{thresholddiagramstwo.eps}%
\end{figure}

\section{Discussion and Conclusion}

Competitive multi-virus propagation shows very rich behaviors, beyond those of
single virus propagation. This type of modeling is suitable for co-propagation
of exclusive entities, for example, opposing opinions about a subject, where
people are for, against, or neutral; spreading of a disease through physical
contact and viral propagation of antidote providing absolute immunity to the
disease, or marketing penetration of competitive products like Android versus
Apple smart phones. Aside from its potential applications, the problem of
competitive spreading over multilayer networks is technically challenging. In
particular, compared to single layer networks, science of multilayer networks
is still in its infancy. There are yet numerous unknowns about this complex problem.

In this paper, we study $SI_{1}SI_{2}S$ model, the simplest extension of $SIS$
model to competitive spreading over a two-layer network, focusing on long-term
behaviors in relation to multilayer network topology. In brief, the major
contributions of this paper are: (a) identification and quantification of
extinction, coexistence, and mutual exclusion via defining survival thresholds
and winning thresholds, (b) proving a region of coexistence and quantitating
it through overlapping of layers central nodes, (c) developing an explicit
approximation formula to globally find threshold values, and (d) proposing a
novel multilayer network generation scheme to capture influence of layers
interrelation. We believe our methodology has great potentials for application
to broader classes of multi-pathogen spreading over multi-layer and
interconnected networks.

\textbf{Acknowledgement:} This work was partially supported by National
Science Foundation under Award DMS-1201427. Any opinions, findings, and
conclusions or recommendations expressed in this paper are those of the
authors and do not necessarily reflect the views of the National Science Foundation.

\appendix*

\section{Selected Proofs}

\subsection{Derivation of Eigenvalue Perturbation Formulae}

Here, we detail the derivations of (\ref{dtaw1cdtaw2small}) and
(\ref{taw1ctaw2large}).

At $\tau_{2}=1/\lambda_{1}(B)$, (\ref{yi_eq}) finds $y_{i}=0$ for all nodes.
Equation (\ref{yi_eq}) is indeed the steady state equation for infection
probabilities in NIMFA model. Van Mieghem \cite{van2009TN} found for
SIS\ model the derivative with respect to effective infection rate, suggesting%
\begin{align}
\frac{dy_{i}}{d\tau_{2}}|_{\tau_{2}=\frac{1}{\lambda_{1}(B)}}  &
=c_{B}v_{B,i},\\
w_{i}|_{\tau_{2}=\frac{1}{\lambda_{1}(B)}}  &  =c_{A}v_{A,i}%
\end{align}
where%
\begin{equation}
c_{A}=\frac{\lambda_{1}(A)}{\sum v_{A,i}^{3}},~c_{B}=\frac{\lambda_{1}%
(B)}{\sum v_{B,i}^{3}},
\end{equation}
where $v_{A}$ and $v_{B}$ are the normalized dominant eigenvectors of $A$ and
$B$, respectively.

Differentiating (\ref{Thresh_Eq}) with respect to $\tau_{2}$ yields:
\begin{align}
\frac{dw_{i}}{d\tau_{2}}  &  =\frac{d\tau_{1c}}{d\tau_{2}}(1-y_{i})\sum
a_{ij}w_{j}\nonumber\\
&  +\tau_{1c}(-\frac{dy_{i}}{d\tau_{2}})\sum a_{ij}w_{j}\nonumber\\
&  +\tau_{1c}(1-y_{i})\sum a_{ij}\frac{dw_{j}}{d\tau_{2}}.
\end{align}
Inserting $\tau_{1c}=1/\lambda_{1}(A)$, $w_{i}=c_{A}v_{A,i}$, $y_{i}=0$, and
$dy_{i}/d\tau_{2}=c_{B}v_{B,i}$, the above equation changes to:%
\begin{equation}
(I-\frac{1}{\lambda_{1}(A)}A)\frac{d\boldsymbol{w}}{d\tau_{2}}=(\frac
{d\tau_{1c}}{d\tau_{2}})\lambda_{1}(A)c_{A}\boldsymbol{v}_{A}-c_{B}%
c_{A}(\boldsymbol{v}_{B}\circ\boldsymbol{v}_{A})
\end{equation}
in the collective form, where the Hadamard product $\circ$\ acts entry-wise.
Multiplying both sides by $\boldsymbol{v}_{A}^{T}$ from left yields:%
\begin{align}
\frac{d\tau_{1c}}{d\tau_{2}}|_{\tau_{2}=\frac{1}{\lambda_{1}(B)}}  &
=\frac{1}{\lambda_{1}(A)}c_{B}\boldsymbol{v}_{A}^{T}(\boldsymbol{v}_{B}%
\circ\boldsymbol{v}_{A})\nonumber\\
&  =\frac{\lambda_{1}(B)}{\lambda_{1}(A)}\frac{\sum v_{A,i}^{2}v_{B,i}}{\sum
v_{B,i}^{3}},
\end{align}
obtaining (\ref{dtaw1cdtaw2small}). Finding $\frac{d\tau_{1c}}{d\tau_{2}}$ at
$\tau_{2}=1/\lambda_{1}(B)$ obtains the dependence of $\tau_{1c}$ on $\tau
_{2}$ close to $1/\lambda_{1}(B)$.

Replacing for $1-y_{i}=\frac{\tau_{2}^{-1}}{\tau_{2}^{-1}+\sum b_{ij}y_{j}}$
from (\ref{yi_eq}) into (\ref{Thresh_Eq}) yields%
\begin{equation}
w_{i}=(\frac{\tau_{1c}}{\tau_{2}})(\frac{1}{\tau_{2}^{-1}+\sum b_{ij}y_{j}%
})\sum a_{ij}w_{j}.
\end{equation}
When effective infection rate $\tau_{2}$ is enormous $\tau_{2}^{-1}%
\rightarrow0$ and $y_{i}\rightarrow1$, suggesting%
\begin{equation}
w_{i}=(\frac{\tau_{1c}}{\tau_{2}}|_{\tau_{2}\rightarrow\infty})\frac
{1}{d_{B,i}}\sum a_{ij}w_{j},
\end{equation}
where $d_{B,i}$ is the $B-$degree of node $i$. Therefore, $\frac{\tau_{1c}%
}{\tau_{2}}|_{\tau_{2}\rightarrow\infty}$ is the inverse of the spectral
radius of $D_{B}^{-1}A$, proving (\ref{taw1ctaw2large}) for large values of
$\tau_{2}$.

\subsection{Coexistence Proofs}

\textbf{Coexistent region non-aggressive competitive viruses:}

To investigate the coexistence region for non-aggressive viruses we show that
(\ref{coexistence_1}) is true. From (\ref{dtaw1cdtaw2small}), we find%
\begin{multline}
\frac{d\tau_{1c}}{d\tau_{2}}\times\frac{d\tau_{2c}}{d\tau_{1}}|_{(\tau
_{1},\tau_{2})=(\frac{1}{\lambda_{1}(A)},\frac{1}{\lambda_{1}(B)})}\\
=\frac{(\sum v_{B,i}v_{A,i}^{2})(\sum v_{A,i}v_{B,i}^{2})}{(\sum v_{B,i}%
^{3})(\sum v_{A,i}^{3})} \label{Non-ageeIneq}%
\end{multline}

From H\"{o}lder's inequality%
\begin{align}
\sum v_{B,i}v_{A,i}^{2}  &  =\sum(v_{B,i}^{3})^{1/3}(v_{A,i}^{3}%
)^{2/3}\nonumber\\
&  \leq(\sum v_{B,i}^{3})^{1/3}(\sum v_{A,i}^{3})^{2/3}, \label{ineq1}%
\end{align}
and the equality happens iff $\boldsymbol{v}_{A}=\boldsymbol{v}_{B}$.
Similarly,%
\begin{equation}
\sum v_{A,i}v_{B,i}^{2}\leq(\sum v_{B,i}^{3})^{2/3}(\sum v_{A,i}^{3})^{1/3}.
\label{ineq2}%
\end{equation}
Multiplying sides of (\ref{ineq1}) and (\ref{ineq2}) yields%
\begin{equation}
(\sum v_{B,i}v_{A,i}^{2})(\sum v_{A,i}v_{B,i}^{2})\leq(\sum v_{B,i}^{3})(\sum
v_{A,i}^{3}),
\end{equation}
proving (\ref{Non-ageeIneq}) is true.

\textbf{Coexistent region for aggressive competitive viruses:}

To investigate the coexistence region for non-aggressive viruses we shown that
(\ref{coexistence_1}) is true. Substituting from (\ref{taw1ctaw2large}) yields%

\begin{align}
\frac{\tau_{1c}}{\tau_{2}}|_{\tau_{2}\rightarrow\infty}.\frac{\tau_{2c}}%
{\tau_{1}}|_{\tau_{1}\rightarrow\infty}  &  =\frac{1}{\lambda_{1}(D_{B}%
^{-1}A)}.\frac{1}{\lambda_{1}(D_{A}^{-1}B)}\nonumber\\
&  =\frac{1}{\lambda_{1}(D_{B}^{-1}A\otimes D_{A}^{-1}B)}\nonumber\\
&  =\frac{1}{\lambda_{1}[(D_{B}^{-1}\otimes D_{A}^{-1})(A\otimes
B)]}\nonumber\\
&  =\frac{1}{\lambda_{1}[(D_{B}\otimes D_{A})^{-1}(A\otimes B)]},
\end{align}
according to properties of Kronecker product.

The degree diagonal matrix of $(A\otimes B)$ is $(D_{A}\otimes D_{B})$.
Therefore, $(D_{B}\otimes D_{A})$ is a diagonal permutation of the degree
diagonal matrix of $(A\otimes B)$. According to Lemma 1, presented in the
following, $\lambda_{1}[(D_{B}\otimes D_{A})^{-1}(A\otimes B)]\geq1$, thus%
\begin{equation}
\frac{\tau_{1c}}{\tau_{2}}|_{\tau_{2}\rightarrow\infty}.\frac{\tau_{2c}}%
{\tau_{1}}|_{\tau_{1}\rightarrow\infty}\leq1,
\end{equation}
and equality holds only if $D_{B}\otimes D_{A}=D_{A}\otimes D_{B}$, which
holds only if ratio of $B-$degree and $A-$degree of each node is same for all nodes.

\begin{lemma}
If $H=\pi(D_{C})^{-1}C$, where $\pi(D_{C})$ is a diagonal permutation of
degree diagonal matrix of symmetric matrix $C$, then $\lambda_{1}(H)\geq1$.
Furthermore, equality holds only if $\pi(D_{C})=D_{c}$.
\end{lemma}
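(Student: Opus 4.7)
My plan is to reduce the spectral-radius question to a symmetric eigenvalue problem and then exploit the all-ones vector as a test. The first step will be to observe that, because $C$ is symmetric and $\pi(D_C)$ is a positive diagonal matrix, $H = \pi(D_C)^{-1} C$ is similar to the symmetric matrix $S := \pi(D_C)^{-1/2} C \pi(D_C)^{-1/2}$ via conjugation by $\pi(D_C)^{1/2}$. Hence all eigenvalues of $H$ are real and $\lambda_1(H) = \lambda_{\max}(S)$ admits the Rayleigh--Ritz characterization, which after the change of variable $y = \pi(D_C)^{-1/2} z$ becomes
\begin{equation}
\lambda_1(H) \;=\; \max_{y \neq 0} \frac{y^{T} C\, y}{y^{T}\,\pi(D_C)\,y}.
\end{equation}

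The second step is to evaluate this Rayleigh quotient at $y = \mathbf{1}$. The numerator is $\mathbf{1}^{T} C \mathbf{1} = \sum_{i,j} C_{ij} = \sum_i d_{C,i}$, while the denominator is $\mathbf{1}^{T} \pi(D_C) \mathbf{1} = \sum_i (\pi(D_C))_{ii}$. The crucial observation is that a permutation only reorders diagonal entries without changing their sum, so the denominator also equals $\sum_i d_{C,i}$. The ratio is therefore $1$, and $\lambda_1(H) \geq 1$ follows immediately.

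For the equality assertion I will invoke the standard fact that the Rayleigh quotient of a symmetric matrix attains its maximum only on the leading eigenspace. If $\lambda_1(H) = 1$, then $y = \mathbf{1}$ is a maximizer, so $\mathbf{1}$ must satisfy $H \mathbf{1} = \mathbf{1}$, i.e.\ $C \mathbf{1} = \pi(D_C) \mathbf{1}$. Reading this entrywise gives $d_{C,i} = (\pi(D_C))_{ii}$ for every $i$, which forces $\pi(D_C) = D_C$. Conversely, when $\pi(D_C) = D_C$ the matrix $H = D_C^{-1} C$ is row-stochastic, so $\lambda_1(H) = 1$ exactly, closing the iff.

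The calculations here are entirely routine; the only point requiring mild care is the justification that a maximizer of the Rayleigh quotient must lie in the top eigenspace, which is standard for symmetric matrices but worth mentioning. The real content of the lemma is the elegant cancellation $\mathbf{1}^{T} \pi(D_C) \mathbf{1} = \mathbf{1}^{T} D_C \mathbf{1} = \mathbf{1}^{T} C \mathbf{1}$, which is precisely what a permutation buys us and is the reason the bound $\lambda_1 \geq 1$ comes for free.
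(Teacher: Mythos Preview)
Your proof is correct and follows exactly the same route as the paper: symmetrize via conjugation by $\pi(D_C)^{1/2}$, invoke the Rayleigh quotient, and test at the all-ones vector (the paper chooses the test vector $x=\pi(D_C)^{1/2}\mathbf{1}$ in the unsubstituted variable, which is precisely your $y=\mathbf{1}$ after the change of variables). Your equality discussion and the added converse are a bit more explicit than the paper's, but the argument is identical in substance.
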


\begin{proof}
The largest eigenvalue maximizes Rayleigh quotient, therefore,%
\begin{align*}
\lambda_{1}(H)  &  =\lambda_{1}(\pi(D_{C})^{-1}C)=\lambda_{1}(\pi
(D_{C})^{-1/2}C\pi(D_{C})^{-1/2})\\
&  =\max_{x}\frac{x^{T}\pi(D_{C})^{-1/2}C\pi(D_{C})^{-1/2}x}{x^{T}x}\\
&  \geq\frac{1^{T}C1}{1^{T}\pi(D_{C})1}=\frac{\sum d_{C,i}}{\sum d_{C}%
,_{\pi_{i}}}=1,
\end{align*}
where $d_{C},_{\pi_{i}}$ is the degree of node $i$ map. Therefore,
$\lambda_{1}(H)\geq1$. Equality holds only if $x=\pi(D_{C})^{1/2}1$ is the
dominant eigenvector of $\pi(D_{C})^{-1/2}C\pi(D_{C})^{-1/2}$, i.e.,
$\pi(D_{C})^{-1/2}C1=\pi(D_{C})^{1/2}1$, which only holds if $d_{C},_{\pi_{i}%
}=d_{C,i}$.
\end{proof}

\subsection{Steady State Numerical Solution}

Given $\tau_{2}>1/\lambda_{1}(B)$, (\ref{Thresh_Eq}) and (\ref{yi_eq})
numerically find $\tau_{1,c}$. We now define $x_{i}\triangleq\frac{y_{i}%
}{1-y_{i}}$, given the recursive iteration law:%
\begin{equation}
x_{i}(k+1)=\tau_{2}\sum b_{ij}\frac{x_{j}(k)}{1+x_{j}(k)} \label{SIS_SS_Law}%
\end{equation}
to prove they converge exponentially, numerically solving (\ref{yi_eq}) as
$\frac{x_{i}(k)}{1+x_{i}(k)}\rightarrow y_{i}$. The main advantage of finding
equilibrium values using recursive law (\ref{SIS_SS_Law}) instead of solving
ordinary differential equations of the model is recursive law
(\ref{SIS_SS_Law}) does not require incremental time increase, making
computations drastically faster.

Furthermore, the steady-state infection probabilities in (\ref{EqEq1}%
)-(\ref{EqEq2}) can be found via the recursive iteration law:%
\begin{align}
x_{i}(k+1)  &  =\tau_{1}\sum a_{ij}\frac{x_{j}(k)}{1+x_{j}(k)+z_{j}(k)},\\
z_{i}(k+1)  &  =\tau_{2}\sum b_{ij}\frac{z_{j}(k)}{1+x_{j}(k)+z_{j}(k)},
\end{align}
as $\frac{x_{i}(k)}{1+x_{i}(k)+z_{i}(k)}\rightarrow p_{1,i}^{\ast}$ and
$\frac{z_{i}(k)}{1+x_{i}(k)+z_{i}(k)}\rightarrow p_{2,i}^{\ast}$.

\bibliographystyle{revcompchem}
\bibliography{MultiPathogene}

\end{document}